\documentclass[9pt,twocolumn,journal]{IEEEtran}
\usepackage{amssymb,amsfonts}
\usepackage[cmex10]{amsmath}
\usepackage{enumerate}
\usepackage{graphicx}
\usepackage{cite}

\usepackage{caption,subcaption}

\usepackage{tikz}
\usetikzlibrary{calc,shapes,arrows}

%%%%%%%%%%%%%%% from Ming %%%%%%%%%%%%%
\def\bb#1{{\mathbb #1}}
\newcommand{\R}{{\rm I\!R}}
\newcommand{\dfb}{\stackrel{\Delta}{=}}

\def\scr#1{{\mathcal #1}}

%%%%%%%%%%%%%%%%%%%%%%%%%%%%%%%%%%%%%%%%%%%%%%%%%%%%%%%%%%%%%%%%%%%%%

\def\ddt{\frac{d}{dt}}
\def\la{\lambda}

\def\vp{\varphi}
\def\ve{\varepsilon}

\def\r{\mathbb R}

\def\G{\mathcal G}

\def\om{\omega}

\def\be{\beta}

\def\be{\begin{equation}}
\def\ee{\end{equation}}
\def\ben{\begin{equation*}}
\def\een{\end{equation*}}

\newtheorem{lemma}{Lemma}
\newtheorem{theorem}{Theorem}

\newtheorem{assumption}{Assumption}

\newtheorem{defn}{Definition}
\newtheorem{remark}{Remark}

\begin{document}
\title{Synchronization of Goodwin's oscillators under boundedness and nonnegativeness constraints for solutions.
%\\ {\LARGE  Extension of ``Global state synchronization in networks of cyclic feedback systems"}
}

\author{Anton~V.~Proskurnikov,~\IEEEmembership{Member,~IEEE}, Ming~Cao,~\IEEEmembership{Member,~IEEE}%
\thanks{The work was supported in part by the European Research Council (ERCStG-
307207), St. Petersburg State University, grant 6.38.230.2015 and RFBR grant 14-08-01015.
Theorems~\ref{thm3} and \ref{thm-my} were obtained, respectively, under sole support of Russian Science Foundation grants 16-19-00057 and 14-29-00142 
at Institute for Problems of Mechanical Engineering of Russian Academy of Sciences (IPME RAS).}.
\thanks{Anton V. Proskurnikov is with ENTEG at the Faculty of Mathematics and Natural Sciences, University of Groningen, Groningen,
the Netherlands, and also with St. Petersburg State University, ITMO University and IPME RAS, St.Petersburg, Russia;\;{\tt\small anton.p.1982@ieee.org}}%
\thanks{Ming Cao is with ENTEG at the Faculty of Mathematics and Natural Sciences, University of Groningen, The Netherlands;\;{\tt\small m.cao@rug.nl}}%
}

\maketitle

\begin{abstract}
In the recent paper by Hamadeh et al. (2012) an elegant analytic criterion for incremental output feedback passivity (iOFP) of cyclic feedback systems (CFS) has been reported,
assuming that the constituent subsystems are incrementally output strictly passive (iOSP). This criterion was used to prove that a network of identical CFS can be synchronized under
sufficiently strong linear diffusive coupling. A very important class of CFS consists of biological oscillators, named after Brian Goodwin and describing self-regulated chains of enzymatic reactions, 
where the product of each reaction catalyzes the next reaction, while the last product inhibits the first reaction in the chain. Goodwin's oscillators are used, in particular,
to model the dynamics of genetic circadian pacemakers, hormonal cycles and some metabolic pathways.

In this paper we point out that for Goodwin's oscillators, where the individual reactions have nonlinear (e.g. Mikhaelis-Menten) kinetics,
the synchronization criterion, obtained by Hamadeh et al., cannot be directly applied. This criterion relies on the implicit assumption of the solution
\emph{boundedness}, dictated also by the chemical feasibility (the state variables stand for the concentrations of chemicals).
Furthermore, to test the synchronization condition one needs to know an \emph{explicit} bound for a solution, which generally cannot be guaranteed under linear coupling.
At the same time, we show that these restrictions can be avoided for a \emph{nonlinear} synchronization protocol, where the control inputs are ``saturated''
by a special nonlinear function (belonging to a wide class), which guarantees nonnegativity of the solutions and allows to get explicit ultimate bounds for them.
We prove that oscillators synchronize under such a protocol, provided that the couplings are sufficiently strong.
\end{abstract}

\section{Introduction}

The rhythmicity of many vital processes in living organisms, such as the cell division,  blood pulse and breathing, diurnal sleep and wake cycle, are controlled by genetic and other biochemical ``clocks'', or \emph{pacemakers}, that are typically described by nonlinear systems of differential equations with stable limit cycles as their solutions. One of the first and most influential models of this type, describing genetic oscillators \cite{Murray02,GoBeWaKrHe05,LoWeKrHe08,VaHe11}, metabolic pathways in a cell \cite{Costalat96}
and hormonal cycles \cite{Murray02,Smith83,Medvedev09}, is known (along with its extensions) as  \emph{Goodwin's oscillator}. For 50 years since Goodwin's
seminal paper \cite{Goodwin65} this model has been attracting intensive attention in applied mathematics.
%recently a number of control theoretic approaches were proposed for their investigation \cite{Medvedev09,ArcakSontag06,EncisoSmithSontag06,StanSepulchre,Hara11-1,HaraDoyle14}.

A challenging problem concerned with biochemical oscillators is to study mechanisms of their synchronization via coupling. Experiments and extensive simulations
(see \cite{GoBeWaKrHe05,LoWeKrHe08,VaHe11} and references therein) show that the stable 24h-periodic circadian rhythm is not inherent to intracellular genetic oscillators
(whose natural periods are spread from 20h to 28h) but emerges due to the \emph{coupling} among them, which also facilitates the oscillations' entrainability by
the daylight and other environmental cues (``zeitgebers'').

The Goodwin oscillator is a special case of a \emph{cyclic feedback system} (CFS), consisting of incrementally output passive blocks.
An important step in understanding the synchronization mechanism for such systems has been done in the recent paper \cite{HaStSeGo12}, establishing
an elegant criterion for synchronization of \emph{identical} CFS under sufficiently ``strong'' linear diffusive couplings. An ensemble of
CFS gets synchronized if the algebraic connectivity of the (weighted) digraph, describing the coupling between the systems,
exceeds the incremental passivity gain of the CFS. A critical observation is that this gain depends on the secant gains of the constituent subsystems.

As will be shown, the criterion from \cite{HaStSeGo12} not only adopts an implicit assumption of the solution's \emph{boundedness}, but in fact requires to find the bounds \emph{explicitly}. In order to apply this criterion, one needs to estimate the incremental passivity (or secant) gains of the blocks constituting the CFS. Synchronization is guaranteed only when these gains are finite, except for that of the leading block, since otherwise the minimal coupling strength, required to synchronize oscillators, becomes infinite.
As will be discussed in Section~\ref{sec.iofp}, the chemical reactions with linear kinetics correspond to the blocks with finite secant gains.
However, nonlinear (e.g. Mikhaelis-Menten) kinetics, typically arising in models of enzymatic and other biochemical reactions \cite{ChenNiepelSorger,OBrien2012},
lead to the \emph{infinite} passivity gain of the correspondent block. This gain becomes finite only for solutions, confined to some \emph{bounded} set, and to estimate the gain,
one has to find this set or, equivalently, explicit bound for the solution. For a linear diffusive protocol establishing such bounds is a non-trivial, and in fact open problem.
This hinders application of the criterion from \cite{HaStSeGo12} to Goodwin's biochemical oscillators with Mikhaelis-Menten nonlinearities, modeling e.g. the genetic circadian clocks \cite{GoBeWaKrHe05,LoWeKrHe08,VaHe11}.

In this technical note, we propose a modification of the algorithm from \cite{HaStSeGo12}, combining the usual diffusive coupling with a nonlinear ``saturating'' map,
which, similar to the linear protocol from \cite{HaStSeGo12}, guarantees non-negativity of the solutions but, additionally, provides an explicit upper
bound for the solutions. Under some technical assumptions, we prove that the ensemble of CFS's synchronizes, and find explicitly the margin for the coupling strength.
Unlike linear coupling protocols, the ``saturated'' protocol also guarantees non-negative control input which can be important when
such an input stands for some chemical concentration (e.g., the models of circadian oscillators from \cite{GoBeWaKrHe05,LoWeKrHe08,VaHe11} treat the input as the concentration of
a neurotransmitting polypeptide in the extracellular domain).

The main contribution of the paper is twofold. First, we point out some limitations of the synchronization criterion from \cite{HaStSeGo12}, concerned with the necessity to prove the solution
boundedness and estimate the incremental passivity gains. Second, we develop the approach from \cite{HaStSeGo12} to address ``saturated'' protocols,
providing synchronization of Goodwin-type oscillators with nonlinear reactions' kinetics. Dealing with a more general class of Goodwin's oscillators, our result inevitably inherits two basic 
limitations of the incremental passivity approach \cite{StanSepulchre,HaStSeGo12} and is confined to \emph{identical} oscillators and \emph{diffusive} couplings 
(the input of each oscillator depends on the deviation between its own and neighbors' outputs). The results of this paper can be applied e.g. to synchronization of \emph{synthetic} 
oscillator networks (see e.g. \cite{OBrien2012} and references therein), where individual oscillators and synchronization protocols are artificially engineered.

Note that such oscillator networks as the main circadian pacemaker in mammals consist of heterogeneous cells that are coupled non-diffusively (being, in fact, an example of pulse-coupled
network). A simplified continuous-time model for such a network, proposed in \cite{GoBeWaKrHe05,LoWeKrHe08}, employs a non-diffusive \emph{mean-field} coupling. Unlike the diffusive protocols,
under mean-field coupling the inputs of oscillators are \emph{identical} (depending on the average concentration of neurotransmitter, released by individual cells) and do not vanish as the oscillators get synchronized. Synchronization of oscillators under mean-field couplings and more complicated ``nearest-neighbor'' coupling rules \cite{VaHe11} remains a non-trivial mathematical problem.

\section{A class of cyclic feedback systems with incremental passivity properties}\label{sec.prelim}

We first briefly recall the central concepts of the incremental output \emph{strict} and \emph{feedback} passivity (iOSP and iOFP) \cite{HaStSeGo12}.
To simplify matters, we confine ourselves to single input-single output systems. We will make use the following notations. Given a vector $x=(x_1,\ldots,x_n)^{\top}\in\R^n$,
we denote $\|x\|\dfb\sqrt{x^{\top}x}$ and $\|x\|_{\infty}\dfb\max_i |x_i|$. We define ${\mathbf 1}_N\dfb(1,1,\ldots,1)\in\R^N$, and use the symbol $I_N$ for the identity $N\times N$-matrix. Given a matrix $L$, let $\|L\|_{\infty}\dfb \max_j\sum_{k}|L_{jk}|$ stand for its max norm; for $z=Lx$ one has then $\|z\|_{\infty}\le \|L\|_{\infty}\|x\|_{\infty}$.
Let $\R_+\dfb (0;+\infty)$ and $\overline\R_+\dfb [0;\infty)$.

\subsection{Definition of the iOFP property}

\begin{defn}\label{def.iosp} A system $\bb H$, whose dynamics obey
\begin{equation}\label{block1}
\bb H : \left \{
\begin{array}{lll}
 \dot x&=&\varphi(x,u) \\
y&=&\varrho(x,u)
\end{array}
\right.
\end{equation}
where $x\in\R^{d}$, $u\in\R$, and $y\in\R$ stand respectively for the state, input and output of $\bb H$, is said to be \emph{incrementally output strictly passive} with passivity (or \emph{secant}) gain $\gamma>0$, written iOSP($\gamma^{-1}$),  if a radially unbounded, positive definite function $S: \R^{r} \mapsto \R$ exists such that for any two solutions to (\ref{block1}), denoted repectively by $x^+$ associated with $y^+,u^+$ and $x^{\dagger}$ associated with $y^{\dagger},u^{\dagger}$, the increments $\Delta x=x^+-x^{\dagger}$, $\Delta y = y^+-y^{\dagger}$ and $\Delta u  =u^+-u^{\dagger}$ satisfy
\be\label{iOSP}
\ddt S(\Delta x)\le\Delta u\Delta y-\gamma^{-1}|\Delta y|^2.
\ee
The function $S$ is referred to as the \emph{incremental storage function}.
More generally,  $\bb H$ is said to be \emph{incrementally output feedback passive}, written  iOFP($\gamma^{-1}$), if inequality \eqref{iOSP} holds for some nonzero $\gamma\in\R\cup\{+\infty\}$ that has been relaxed from being strictly positive.
\end{defn}

In a degenerate case of \emph{static} input-out map of $\bb H$, taking the special form $u\mapsto y=g(u)$, inequality (\ref{iOSP}) simplifies to the condition
\be\label{iOSP0}
0\le\Delta u\Delta y-\frac{1}{\gamma}|\Delta y|^2,
\ee
i.e. $g$ is non-decreasing and Lipschitz: $|g(u_1)-g(u_2)|\le\gamma|u_1-u_2|$.

Definition~\ref{def.iosp} deals with the case when system $\bb H$ defined \emph{globally}, that is, $x(t)\in\R^{r}$ may be arbitrary.
The dynamics of biochemical systems are naturally defined in the positive orthant, and the iOFP property for such systems often can be proved in even more narrow domains.
We say the inequality \eqref{iOSP} is satisfied in a set $\G_x\subseteq\R^r$, if it holds for any two solutions $x^+$ associated with $y^+,u^+$ and $x^{\dagger}$ associated
with $y^{\dagger},u^{\dagger}$ as long as  $x^+(t)$ and $x^{\dagger}(t)$ have the property that $x^+(t),x^{\dagger}(t)\in \mathcal G_x$ for any $t\ge 0$. Here $\G_x$ is not necessarily invariant and we are only checking those solutions that stay in $\G_x$ for all $t$; the maps $\varphi,\varrho$ are defined on $\G_x\times\R^m$ and $S$ in \eqref{iOSP} is defined at least on the set $\G_x-\G_x \dfb \{x_1-x_2\in\R^{r}:x_1,x_2\in\mathcal G_x\}$, positive definite and radially unbounded (if $\G_x-\G_x$ is unbounded).
We call such a system $\bb H$ \emph{incrementally output feedback passive with gain $\gamma$ in the set $\G_x$}, written iOFP($\gamma^{-1}$,$\G_x$); if $\gamma>0$, we
call this property incrementally output \emph{strict passivity} in $\G_x$, written iOSP($\gamma^{-1}$,$\G_x$).

\subsection{An iOFP criterion for CFS}\label{sec.iofp}

Many \emph{cyclic feedback systems} (CFS), including Goodwin-type oscillators, appear to be iOSP or iOFP, provided that all of their sub-systems are iOSP. The results of \cite{HaStSeGo12} are concerned with CFS whose structures are described by the block diagram in
Fig. \ref{cyclic}. As illustrated, the overall system with the input $u_{ext}\in \R$ and the output $y_1\in \R$ consists of $n>1$ nonlinear subsystems $\bb H_i$, governed by
\begin{equation}\label{block}
\bb H_i : \left \{
\begin{array}{lll}
 \dot x_i&=&\varphi_i(x_i,u_i) \\
y_i&=&\varrho_i(x_i,u_i)
\end{array}
\right.,\quad i = 1, \ldots, n.
\end{equation}
Here $x_i\in\R^{r_i}$, $u_i\in\R$, $y_i\in\R$ are $\bb H_i$'s state, input and output respectively, and
$\vp_i : \R^{r_i} \times \R \to \R^{r_i}$ and $\varrho_i: \R^{r_i}\times \R \to \R$
are Lipschitz.
\tikzstyle{block} = [draw, rectangle, minimum height=2em, minimum width=2em, very thick]
\tikzstyle{sum} = [draw, circle]
\tikzstyle{input} = [coordinate]
\tikzstyle{output} = [coordinate]
\tikzstyle{pinstyle} = [pin edge={to-,thin,black}]
\begin{figure}[h]
\centering
\begin{tikzpicture}[auto, node distance=1.7cm,>=latex']
\node [input, name=input] {};
    \node [sum, right of=input, node distance=1cm] (sum) {};
    \node [block, right of=sum, node distance=1cm] (h1) {$\mathbb H_1$};
    \node [block, right of=h1] (h2) {$\mathbb H_2$};
    \node [right of=h2] (dots) {\ldots};
    \node [block, right of=dots] (hn) {$\mathbb H_n$};
    \node [output, right of=hn, node distance=1cm] (output) {};
\draw [draw,->] (input) -- node [near start] {$u_{ext}$} node [pos=0.9] {\tiny $+$}(sum);
\draw [->] (sum) -- node {$u_1$} (h1);
\draw [->] (h1) -- node [name=y1,near start]{$y_1$} (h2);
\draw [->] ($ (h1.east) + (5mm,0) $) -- ++(0,8mm);
\draw [->] (h2) -- node [near start] {$y_2$} (dots);
\draw [->] (dots) -- node [near start] {$y_{n-1}$} (hn);
\draw [->] (hn) -- node [name=y] {$y_n$}(output)
                -| ++(0,-1cm)
                -| node [pos=0.9] {\tiny $-$} (sum);
\end{tikzpicture}
\caption{Block diagram of a cyclic feedback system}\label{cyclic}
\end{figure}
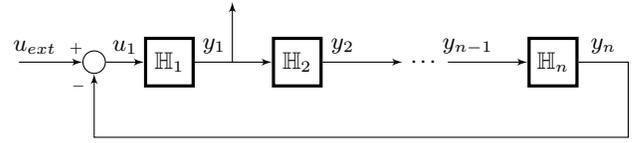
The cascaded structure of the system imposes that
\begin{eqnarray} \label{couple}
u_i & = & y_{i-1}, \qquad i=2, \ldots, n \nonumber \\
u_1 & = & u_{ext}(t)-y_{n}.
\end{eqnarray}
Therefore, the dynamics of the overall CFS can be described by
\be\label{cfs}
\begin{array}{rclrl}
\dot x_1&=&\vp_1(x_1,u_{ext}-y_n),&y_1=&\varrho_1(x_1,u_{ext}-y_n),\\
\dot x_2&=&\vp_2(x_2,y_1),        &y_2=&\varrho_2(x_2,y_1),\\
\vdots  & & & &\\
\dot x_n&=&\vp_n(x_n,y_{n-1}),    &y_n=&\varrho_n(x_n,y_{n-1}).
\end{array}
\ee

An important result of \cite[Theorem 1]{HaStSeGo12} states that a CFS, composed of iOSP blocks $\mathbb H_i$, is always iOFP with a gain, satisfying the \emph{secant} condition.
Namely, if each $\bb H_i$ is iOSP($\gamma_i^{-1},\G_{x_i}$) (with $\gamma_i>0$), then CFS \eqref{cfs} is iOFP($-k$,$\G_x$), where $\G_x\dfb\G_{x_1}\times\ldots\times \G_{x_n}$ and
\be\label{sec}
k>\bar k\dfb -\frac{1}{\gamma_1}+\gamma_2\gamma_3\ldots\gamma_n\left(cos\frac{\pi}{n}\right)^n.
\ee
Theorem~1 in \cite{HaStSeGo12} provides\footnote{Formally, this theorem deals only with the case where $\G_{x_i}=\R^{r_i}\,\forall i$, but its extension to CFS that are not globally iOSP is straightforward.} a constructive way to find the incremental storage function $V(\Delta x)$ (positive and radially unbounded), such that
\be\label{iOSPcfs}
\dot V(\Delta x)+\alpha \|\Delta y\|^2\le k(\Delta y_1)^2+\Delta y_1\Delta u_{ext}
\ee
for any two solutions staying in $\G_x$. Here $x\dfb [x_1^{\top},\ldots,x_n^{\top}]^{\top}$, $y\dfb [y_1,\ldots,y_n]^T$ stand for the joint state and output vectors respectively.

Notice that $\bar k<\infty$ if and only if all the gains $\gamma_i$ are finite, except for possibly the gain $\gamma_1$ of the leading block. For the blocks, constituting the Goodwin-type oscillator, the gain is usually finite only in a \emph{bounded} domain due to presence of the Mikhaelis-Menten or the Hill nonlinearity.\footnote{The Mikhaelis-Menten function is the nonlinear function in the form $f(x)=K_1x/(K_2+x)$, and the Hill function is the nonlinear function  in the form $f(x)=K_1/(K_2+x^p)$, where $K_1,  K_2,p>0$ are constant.} Basically, the state-space description (\ref{block}) of the block $\bb H_i$, representing a chemical reaction, is given by
\be\label{h_i}
\dot x_i=-f_i(x_i)+u_i\in\R,\, y_i=g_i(x_i)\in\R,\quad (i=1,2,\ldots,n)
\ee
where the maps $f_i,g_i$ are strictly increasing and Lipschitz continuous.

In the case when $g_i$ is linear\footnote{The linearity  follows implicitly from the assumption that the integral in \cite[eq.(8)]{HaStSeGo12} is well-defined.} ($g_i'(x)\equiv const>0$), it was proved in \cite{HaStSeGo12,Hamadeh}
that the subsystem \eqref{h_i} is iOSP($\bar \gamma_i^{-1}$), where
\be\label{gamma-wrong}
\bar \gamma_i=\sup_{x_i}\frac{g'_i(x_i)}{f_i'(x_i)}.
\ee
This claim remains valid also for nonlinear monotone functions $f_i,g_i$ if $\inf_{x_i} g_i'(x_i)>0$ \cite{Hamadeh}. The following lemma extends this result to the case where $f_i$
is not necessarily monotone and shows that the condition $\inf g_i'>0$ is critical and cannot be dropped.
\begin{lemma}\label{true-lem}
Let $\G_{x_i}\subset\R$ be an interval where the functions $f_i,g_i$ are $C^1$-smooth. For system \eqref{h_i}, the following claims hold:
\begin{enumerate}
\item If $\inf_{x_i} g'_i(x_i)>0$ and $\tilde k_i=\inf_{x_i}\frac{f'_i(x_i)}{g_i'(x_i)}>-\infty$, then \eqref{h_i} is iOFP($\tilde k_i$,$\G_{x_i}$). Hence, if $f_i$ is non-decreasing on $\G_{x_i}$, i.e. $\tilde k_i\ge 0$, then system \eqref{h_i} is iOSP($\bar\gamma_i^{-1}$,$\G_{x_i}$) with ${\bar\gamma_i}$  given in \eqref{gamma-wrong}.
    The incremental storage function $S(\Delta x)$ can be chosen quadratic.
\item Let $\G_{x_i}=(0;\infty)$, $f_i$ be globally bounded and $g_i(x)\sim a + b x^{-\alpha}$ as $x\to+\infty$, where $a,b \in \R$ and $\alpha>0$. Then \eqref{h_i} can \emph{never} be iOSP no matter what passivity gain is chosen.
\end{enumerate}
\end{lemma}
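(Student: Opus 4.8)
The plan is to argue by contradiction. Suppose \eqref{h_i} (write $f\dfb f_i$, $g\dfb g_i$) were iOSP($\gamma^{-1}$,$\G_{x_i}$) for some finite $\gamma>0$ and some increment storage function $S$. The idea is to confine a pair of admissible solutions to the saturation region of $g$, where the output increments are uniformly tiny although the state increment is a fixed $w>0$; then the time-integral of the right-hand side of \eqref{iOSP} is forced to vanish, while its left-hand side $S(w)$ stays positive. First I would record two structural facts. Since $g$ is strictly increasing with $g(x)\to a$, one has $g(x)<a$ for all $x$ and $\delta_X\dfb a-g(X)\to 0$ as $X\to+\infty$. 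Since $f$ is bounded and increasing it has a finite limit $f_\infty$, so $\eta_X\dfb f_\infty-f(X)\to 0$ as well.

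Next, I fix $w>0$ and, for each large $X$, build two solutions, both lying in $(0;\infty)$. Let $x^\dagger(t)\equiv X$ be the equilibrium sustained by the constant input $u^\dagger\equiv f(X)$, and let $x^+$ run at unit speed from $x^+(0)=X$ to $x^+(w)=X+w$, i.e. the solution produced by $u^+(t)\dfb 1+f(x^+(t))$ on $[0;w]$ (and kept at an equilibrium afterwards). Then $\Delta x(0)=0$ and $\Delta x(w)=w$, so integrating \eqref{iOSP} over $[0;w]$ and using $S(0)=0$ gives
\[ S(w)\le\int_0^w\Delta u\,\Delta y\,dt-\frac{1}{\gamma}\int_0^w|\Delta y|^2\,dt. \]

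I then estimate the supply integral. From $\dot x^+=-f(x^+)+u^+$ and $u^\dagger=f(X)$ we have $\Delta u=\dot x^++\big(f(x^+)-f(X)\big)$. The change of variables $s=x^+(t)$ turns the leading part into a purely spatial integral,
\[ \int_0^w\dot x^+\,\Delta y\,dt=\int_X^{X+w}\big(g(s)-g(X)\big)\,ds\le\delta_X\,w, \]
since $0\le g(s)-g(X)\le a-g(X)=\delta_X$ on $[X;X+w]$. For the residual part, $0\le f(x^+)-f(X)\le\eta_X$ and $0\le\Delta y\le\delta_X$, so it is at most $\eta_X\delta_X w$; discarding the nonpositive dissipation term I obtain $S(w)\le\delta_X w\,(1+\eta_X)$.

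Finally, letting $X\to+\infty$ makes the right-hand side tend to $0$ — the saturation of $g$ (i.e. $\delta_X\to 0$) being the decisive ingredient and boundedness of $f$ ($\eta_X\to0$) the supporting one — so $S(w)\le 0$, contradicting positive definiteness ($S(w)>0$ for $w\ne0$). As $\gamma$ and $S$ were arbitrary, the system is never iOSP. The one genuinely delicate step, which I expect to be the main obstacle, is the bound on $\int_0^w\Delta u\,\Delta y\,dt$: driving the state through a fixed increment forces $u^+$, and hence $\Delta u$, to be large, so $\Delta u$ cannot be controlled directly; the remedy is to integrate its dominant part $\dot x^+$ out into $\int_X^{X+w}\big(g-g(X)\big)\,ds$, which saturation renders negligible, leaving a remainder that boundedness of $f$ keeps small. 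This is precisely how the condition $\inf g'=0$ (here $g'(x)\to0$) destroys output strictness, mirroring part~1, where $\inf g'>0$ is exactly what secured it.
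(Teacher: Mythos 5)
Your proposal proves only claim 2) of the lemma; claim 1) is never argued. The statement also asserts that under $\inf_{x_i} g_i'>0$ and $\tilde k_i=\inf_{x_i} f_i'/g_i'>-\infty$ the block \eqref{h_i} is iOFP($\tilde k_i$,$\G_{x_i}$) with a \emph{quadratic} incremental storage function, and your closing sentence invokes this part only as a heuristic mirror of your contradiction, not as something established. That is a genuine gap. The paper's proof of 1) is short but substantive: by definition of $\tilde k$ the function $f-\tilde k g$ is nondecreasing, so the mean value theorem gives $(\Delta f-\tilde k\,\Delta g)\Delta x=(f'(\theta)-\tilde k g'(\theta))(\Delta x)^2\ge 0$, whence $\Delta u\,\Delta x-\tilde k\,\Delta y\,\Delta x\ge\Delta\dot x\,\Delta x$; multiplying by $\Delta y/\Delta x=g'(\theta')\ge\inf g'>0$ and taking $S(\Delta x)$ proportional to $|\Delta x|^2$ yields \eqref{iOSP} with gain $\tilde k$ on $\G_{x_i}$, and the iOSP case with $\bar\gamma_i$ from \eqref{gamma-wrong} follows for $\tilde k_i\ge0$. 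Some such computation must appear for the lemma to be proved in full.

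For claim 2) your argument is correct, and it takes a genuinely different route from the paper's. The paper drives \emph{both} solutions to $+\infty$ with inputs $u^{\dagger}=M+(\gamma+1)t^{\gamma}$ and $u^{+}=u^{\dagger}+2M+\ve$, where $(\gamma+1)\alpha>1$; then $\Delta\dot x\ge\ve$ forces $\Delta x(t)\to\infty$ while $\Delta y=O(t^{-(\gamma+1)\alpha})$ makes the supply $\int_0^{\infty}(\Delta u\,\Delta y-\tilde k|\Delta y|^2)\,dt$ finite, contradicting \emph{radial unboundedness} of $S$. You instead freeze one solution at an equilibrium $X$ deep in the saturation region, push the other by a fixed increment $w$ in finite time, and contradict \emph{positive definiteness}: the supply over $[0,w]$ is at most $\delta_X w(1+\eta_X)\to 0$ as $X\to\infty$, while $S(w)>0$ is fixed. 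Your decisive step — writing $\Delta u=\dot x^{+}+(f(x^{+})-f(X))$ and converting the dominant contribution into the spatial integral $\int_X^{X+w}(g(s)-g(X))\,ds\le\delta_X w$ — is sound; the pair of solutions is admissible (both remain in $(0;\infty)$ for all $t\ge 0$ once $x^{+}$ is held at the equilibrium $X+w$), and $S(0)=0$ lets you integrate \eqref{iOSP}. Your route buys several things: it is finite-time and needs no asymptotics of trajectories; it uses only positive definiteness of $S$ rather than radial unboundedness; it extends verbatim to iOFP with any finite gain, since $\int_0^{w}|\Delta y|^2\,dt\le\delta_X^2 w\to 0$ as well; and it barely needs boundedness of $f$ — your residual term is controlled by $f(X+w)-f(X)$, which the standing Lipschitz assumption on $f_i$ already bounds by a constant times $w$, so saturation of $g$ alone destroys strict output passivity. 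The paper's construction, by contrast, uses boundedness of $f$ essentially (to guarantee $\dot x^{\dagger}\ge(\gamma+1)t^{\gamma}$), but exhibits the failure along unboundedly diverging increments, showing directly that no radially unbounded storage can exist.
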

\begin{proof} We  prove 1) first. For notational simplicity, we drop the subscript $i$ in $f_i,g_i,x_i,u_i,\tilde k_i,\G_{x_i}$ throughout the proof.
By definition of $\tilde k$, one has  $f'_i(x)-\tilde k_ig_i'(x)\ge 0$ for any $x\in\G_{x_i}$.
For any pair of solutions $z^+\dfb (x^+,y^+,u^+)$ and $z^{\dagger}\dfb (x^{\dagger},y^{\dagger},u^{\dagger})$, $x^+(t),x^{\dagger}(t)\in\G_x$, let $\Delta z=(\Delta x,\Delta y,\Delta u)\dfb z^+-z^{\dagger}$, $\Delta f(t)\dfb f(x^+(t))-f(x^{\dagger}(t))$ and $\Delta g(t)\dfb g(x^+(t))-g(x^{\dagger}(t))$. Applying the mean value theorem to $f-\tilde k g$, we know that $\Delta f(t)-\tilde k\Delta g(t)=(f'(\theta(t))-\tilde k g'(\theta(t)))\Delta x(t)$ for which $\theta(t)\in\G_x$ lies between $x^+(t)$ and $x^{\dagger}(t)$, and hence $(\Delta f(t)-\tilde k \Delta g(t))\Delta x(t)\ge 0$ and $\Delta f\Delta x\ge \tilde k \Delta g\Delta x= \tilde k\Delta y\Delta x$.
Therefore, we have
\ben
\Delta u\Delta x-\tilde k\Delta y\Delta x=\Delta \dot x\Delta x+\Delta f\Delta x-\tilde k\Delta g\Delta x\ge \Delta \dot x\Delta x.
\een
Taking $S(\Delta x)\dfb \frac{1}{2\ve}|\Delta x|^2$, where $\ve:=\inf\limits_{x\in\G_{x}} g(x)>0$, one has
\ben
\Delta u\Delta y-\tilde k\Delta y^2=(\Delta u\Delta x-\tilde k\Delta y\Delta x)\frac{\Delta y}{\Delta x}>\ve\Delta\dot x\Delta x=\dot S(\Delta x),
\een
and thus \eqref{h_i} is iOSP($\tilde k$,$\G_x$), which proves statement 1).

We prove 2) by contradiction. Suppose on the contrary that \eqref{h_i} is iOSP($\tilde k,\G_{x}$) with the storage function $S(\Delta x)$. Let $M\dfb \sup\limits_{x}|f(x)|<\infty$, and   $x^{\dagger}$, $x^+$ be a pair of solutions under the inputs $u^{\dagger}(t)=M+(\gamma+1)t^{\gamma}$ and $u^+(t)=2M+\ve+u^{\dagger}(t)$ respectively, where $(\gamma+1)\alpha>1$ and $\ve>0$. Consequently,  $\dot x^{\dagger}\ge (\gamma+1)t^{\gamma}$ so that $x^{\dagger}(t)\ge t^{\gamma+1}+x^{\dagger}(0)$ as $t\to\infty$ and
$\Delta\dot x(t)\ge \ve$. Therefore, $\Delta y(t)$ is in the order of $t^{-(\gamma+1)\alpha}$ as $t\to\infty$, and thus $\Delta u\Delta y$ and $|\Delta y|^2$ are summable functions. By integrating \eqref{iOSP}, one obtains that
\ben
\begin{split}
S(\Delta x(T))-S(\Delta x(0))\le\int_0^{T}(\Delta u\Delta y-\tilde k|\Delta y|^2)dt\le\\\le\int_0^{\infty}(|\Delta u||\Delta y|+|\tilde k||\Delta y|^2)dt<\infty \textrm{\; for all } T>0,
\end{split}
\een
which contradicts the fact that $S(\Delta x(T))\to\infty$ as $T\to\infty$.
\end{proof}

\section{Synchronization of diffusively coupled CFS}\label{sec.synchro}

The iOFP property of the CFS \eqref{cfs} allows to prove synchronization in a network of $N>1$ identical CFS, where the couplings
are described by a weighted, strongly connected, balanced graph with the Laplacian matrix $L\in \R^{N\times N}$.
 Let $u_{ext}^j$, $x^j\dfb[{x_1^j}^T,\ldots,{x_n^j}^T]^T$ and $y^j\dfb[y_1^j,\ldots,y_n^j]^T$ denote respectively the external input, state and output of the $j$th CFS in the network, $j = 1, \ldots, N$. Consider the control law, forcing
the inputs of the coupled CFS in the form
\be\label{proto}
U_{ext}(t)=-L Y_1(t),
\ee
where $U_{ext} \dfb [u_{ext}^1, \ \ldots \ ,  u_{ext}^N] ^T$ and $Y_1\dfb [y_1^1,\ \ldots,\ y_1^N]^T$.

The result of \cite[Theorem 2]{HaStSeGo12} shows that protocol \eqref{proto} synchronizes the outputs of $N$ CFS's if the coupling is sufficiently ``strong''; its ``strength'' is bounded below by the algebraic connectivity of the graph if $L=L^{\top}$ or, generally, by the \emph{second smallest eigenvalue} $\la_2$ of the matrix $\frac{L + L^{\top}}{2}$. Precisely, let each CFS
satisfy \eqref{iOSPcfs} for some $\alpha>0$ and $k>0$ and be limit set detectable \cite{HaStSeGo12}. If $\la_2\ge k$, then any \emph{bounded} solution
of the coupled CFS achieves synchronization
\be\label{sync}
\lim_{t\to+\infty}|x_i^j(t)-x_i^k(t)|=0\quad\forall i=1,\ldots,n;\forall j,k=1,\ldots,N.
\ee

\begin{remark}\label{rem.thm2}
Theorem~2 in \cite{HaStSeGo12} claims a more general result stating that synchronization is achieved without the assumption on bounded solutions; however, as discussed below, this assumption is implicitly required in its proof when appealing to the LaSalle invariance principle. For special types of oscillators, e.g.
the Lur'e system with sector nonlinearity,
the solution's boundedness is ensured by the input-to-state stability property of the individual system \cite{StanSepulchre}. However, in general the technique to drop it remains elusive if not impossible.
\end{remark}

The following theorem extends Theorem~2 in \cite{HaStSeGo12} to the case where the iOFP property holds only in some domain; its proof, following the line of the proof from \cite{HaStSeGo12}, demonstrates, in particular, that the boundedness assumption is essential.
\begin{theorem}\label{thm2}
Consider a system of $N>1$ identical limit-set detectable \cite{HaStSeGo12} CFS \eqref{cfs}, satisfying \eqref{iOSPcfs} with some $\alpha,k>0$ in some \emph{closed} domain
$\G_x\subseteq\R^{r_1+\ldots+r_n}$. Suppose the CFS are coupled together through the protocol \eqref{proto} with $\la_2>k$. Then any \emph{bounded} solution of the closed-loop system, such that  $x^j(t)\in\G_x\,\forall t\ge t_0$ for $j=1,\ldots,N$ and some $t_0\ge 0$, asymptotically synchronizes \eqref{sync}.
\end{theorem}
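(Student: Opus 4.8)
The plan is to build an aggregate incremental storage function out of the pairwise comparisons permitted by \eqref{iOSPcfs} and to exploit the balanced-Laplacian structure of the protocol \eqref{proto} together with the spectral gap $\la_2>k$.

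First I would observe that each closed-loop trajectory $x^j$ is itself a solution of the CFS \eqref{cfs} driven by the external input $u_{ext}^j=-(LY_1)_j$, and that by hypothesis all of them remain in $\G_x$ for $t\ge t_0$. Hence \eqref{iOSPcfs} applies to every ordered pair $(j,l)$, with $\Delta x=x^j-x^l$, $\Delta y=y^j-y^l$ and $\Delta u_{ext}=u_{ext}^j-u_{ext}^l$. Summing over $j<l$ and setting
\be
W\dfb\sum_{j<l}V(x^j-x^l)\ge 0,
\ee
which vanishes precisely on the synchronization manifold $\{x^1=\dots=x^N\}$, I obtain for $t\ge t_0$
\be
\dot W\le-\alpha\sum_{j<l}\|y^j-y^l\|^2+k\sum_{j<l}(y_1^j-y_1^l)^2+\sum_{j<l}(y_1^j-y_1^l)(u_{ext}^j-u_{ext}^l).
\ee

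The next step is to collapse the last two sums into quadratic forms in $Y_1$. Writing $L_s\dfb\frac{L+L^{\top}}{2}$ and using the balancedness identities $L{\mathbf 1}_N=0$ and ${\mathbf 1}_N^{\top}L=0$, a direct computation gives $\sum_{j<l}(y_1^j-y_1^l)(u_{ext}^j-u_{ext}^l)=-N\,Y_1^{\top}L_s Y_1$ and $\sum_{j<l}(y_1^j-y_1^l)^2=N\|Y_1^{\perp}\|^2$, where $Y_1^{\perp}$ is the component of $Y_1$ orthogonal to ${\mathbf 1}_N$. Since ${\mathbf 1}_N$ spans $\ker L_s$ and the smallest eigenvalue of $L_s$ on ${\mathbf 1}_N^{\perp}$ is $\la_2$, one has $Y_1^{\top}L_s Y_1\ge\la_2\|Y_1^{\perp}\|^2$, so the two terms combine to $-N(\la_2-k)\|Y_1^{\perp}\|^2\le 0$. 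Consequently
\be\label{Wdot}
\dot W\le-\alpha\sum_{j<l}\|y^j-y^l\|^2\le 0,\qquad t\ge t_0,
\ee
and $W$ is nonincreasing. This algebraic part, resting only on $\la_2>k$, is routine.

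Finally I would invoke LaSalle's invariance principle, and this is exactly the obstacle the theorem is designed to isolate. Only because the trajectory is \emph{bounded} and $\G_x$ is \emph{closed} is the forward orbit precompact inside the product set $\G_x^{N}$, so that the $\omega$-limit set $\Omega$ is nonempty, compact, invariant and contained in $\G_x^{N}$; since the continuous $W$ converges along the orbit, it is constant on $\Omega$, whence $\dot W\equiv 0$ there. By \eqref{Wdot} this forces $y^j=y^l$ for all $j,l$ on $\Omega$, i.e.\ the outputs synchronize on the limit set. The \emph{limit-set detectability} of the CFS then upgrades this output agreement to state agreement, giving $\Omega\subseteq\{x^1=\dots=x^N\}$; as the bounded orbit approaches its $\omega$-limit set, \eqref{sync} follows. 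The hard part is precisely this LaSalle step: without the explicit boundedness (and closedness of $\G_x$) that guarantees a compact invariant limit set, the argument collapses, which is the point emphasized in Remark~\ref{rem.thm2}.
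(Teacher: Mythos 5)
Your proof is correct and follows essentially the same route as the paper's: both sum the pairwise incremental storage inequalities \eqref{iOSPcfs} over all agent pairs, use the balancedness of $L$ and the spectral gap $\la_2>k$ to absorb the coupling terms (the paper via the projector $\Pi=I_N-\frac{1}{N}\mathbf{1}_N\mathbf{1}_N^{\top}$ and the bound $(\Pi Y_1)^{\top}\Pi U_{ext}\le-\la_2\|\Pi Y_1\|^2$, you via the equivalent identities $\sum_{j<l}(y_1^j-y_1^l)^2=N\|Y_1^{\perp}\|^2$ and $\sum_{j<l}(y_1^j-y_1^l)(u_{ext}^j-u_{ext}^l)=-NY_1^{\top}L_sY_1$), and then apply the bounded-solution form of LaSalle's invariance principle followed by limit-set detectability. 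Your explicit justification of why the $\omega$-limit set is nonempty, compact and invariant inside the closed set $\G_x^N$ matches the paper's own discussion after the proof of where boundedness enters.
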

\begin{IEEEproof}
As before we use $x^j\in \R^{r_1+\cdots+r_n}$ to denote  the state of the $j$th CFS, and now  let $\xi \dfb [{x^1}^T,\ldots,{x^N}^T]^T$ and $\zeta \dfb [{Y^1}^T,\ldots,{Y^N}^T]^T$ be the state and output of the overall networked system respectively. For the incremental storage function $V$ of each individual CFS, which satisfies \eqref{iOSPcfs}, and for $1\leq j, m\leq N$, let $V_{j,m}(\xi)\dfb V(x^m-x^j)$ and $S(\xi)\dfb\frac{1}{2N}\sum_{j,m=1}^NV_{j,m}(\xi)$. Substituting solutions $(x^p,u^p,y^p)$ and $(x^q,u^q,y^q)$, where $x^p(t),x^q(t)\in\G_x$ for $t\ge t_0$,
into \eqref{iOSPcfs}, the following condition is valid as $t\ge t_0$:
\ben%\label{aux2}
\dot {V}(x^p-x^q)+\alpha \|y^p-y^q\|^2\le k(y_1^p-y_1^q)^2+(y_1^p-y_1^q)(u_{ext}^p-u_{ext}^q).
\een
By summing up these inequalities over all $p,q$ and introducing the projector $\Pi\dfb I_N-\frac{1}{N}\mathbf{1}_N\mathbf{1}_N^T$, one arrives at \cite{HaStSeGo12} the following
\be\label{aux0}
\dot S(\xi)\le -\alpha\left\|(\Pi\otimes I_n)\zeta\right\|^2+\left(k\|\Pi Y_1\|^2+(\Pi Y_1)^T\Pi U_{ext}\right).
\ee
Using \eqref{proto}, one easily finds that $\Pi U_{ext}=-\Pi LY_1=-L\Pi Y_1$ and hence $(\Pi Y_1)^T\Pi U_{ext}\le -\la_2\|\Pi Y_1\|^2$; therefore \eqref{aux0} implies that
\be\label{aux00}
\dot S(\xi)\le -\alpha\left\|(\Pi\otimes I_n)\zeta(t)\right\|^2\le 0\quad\forall t\ge t_0.
\ee

Let $\xi(t)$ be a bounded solution with $x^j(t)\in\G_x$ and $\scr M\dfb\{\xi=(x^1,\ldots,x^N)\in\G_x^N: S(\xi)\le S(\xi(t_0))\}$.
Due to \eqref{aux0} one has $\xi(t)\in\scr M$ and hence the closed set $\scr M$ contains the $\om$-limit set of $\xi(\cdot)$.
Thanks to the LaSalle invariance principle, the solution $\xi(t)$ converges to the maximal subset of $\scr M$, where $\dot S=0$ and
hence $y^1=\cdots=y^N$ due to \eqref{aux00}. The limit-set detectability assumption entails now synchronization of the state vectors.
\end{IEEEproof}

Note that a widely used version of LaSalle invariance principle \cite{Kh96} requires the Lyapunov function $S$, along with $\dot S$, to be defined on a \emph{compact invariant} set $\scr M$, and guarantees that any solution starting in $\scr M$ converges to the maximal set where $\dot S=0$. However, original versions of LaSalle's invariance principle \cite{La60,La76} are applicable to any \emph{bounded solution} and guarantee that $\dot S\equiv 0$ on its $\omega$-limit set, provided that $S$ and $\dot S$ are well defined in the vicinity.
Assumption of compactness and invariance of $\scr M$ automatically entail the latter condition, as well as boundedness of any solution starting at $\scr M$. Without this assumption, LaSalle's invariance principle can still be applied, but the extra condition of \emph{boundedness} is then unavoidable.

To prove synchronization of CFS under linear balanced protocol \eqref{proto}, using Theorem~2 in \cite{HaStSeGo12} or more general Theorem~\ref{thm2}, one has first to establish the iOFP property in some domain $\G_x$. As follows from Lemma~\ref{true-lem}, the relevant passivity gains $\gamma_i$ of the subsystems $\bb H_i$ can be infinite or even undefined, unless the corresponding state
variables $x_i$ are confined to some bounded domains $\G_{x_i}$. So the restriction $x(t)\in G_x$, imposed to apply the iOFP property \eqref{iOSPcfs}, requires to find some \emph{explicit} bound for the solution. Even if $\G_x$ can be unbounded (like in the example from \cite{HaStSeGo12}), the criterion still guarantees synchronization only for \emph{bounded} solutions.
Using \eqref{aux00}, deviations $x^p-x^q$ are shown to be bounded, entailing boundedness of the states $x^j(t)$ under input-to-state stability assumptions
(which hold e.g. for Lur'e-type systems \cite{StanSepulchre}). However, proving the solution boundedness for general CFS, coupled via a linear protocol \eqref{proto}, remains a non-trivial problem.

To cope with this problem, we replace the linear protocol \eqref{proto} with a \emph{nonlinear} one, providing sufficiently small and non-negative control inputs $u_{ext}^j$, that is,
$0\le u_{ext}^j(t)\le M_0$, where $M_0$ is some known constant. Under such a constraint, one often can localize the solution in a domain where the incremental passivity gains of all the subsystems are known and finite. Relevant sufficient conditions for this, dealing with Goodwin's oscillators, will be discussed in Section~\ref{sec.appl}.
In fact, the input restrictions are often dictated by the biological feasibility, e.g. in some models of coupled \emph{circadian clocks} \cite{GoBeWaKrHe05,LoWeKrHe08,VaHe11} the oscillators' inputs stand for the concentrations of the neurotransmitter in extracellular media. In this technical note, we do not aim to examine the model from \cite{GoBeWaKrHe05,LoWeKrHe08,VaHe11} itself, which considers \emph{mean field} couplings. Instead, we propose a \emph{diffusive} coupling protocol similar to \eqref{proto}, but employing a non-negative ``saturating'' nonlinearity, which guarantees
the input constraint and thus entails the solution's boundedness. Meanwhile, the protocol constructed below provides synchronization under sufficiently ``strong'' coupling, and the minimal sufficient strength may also be explicitly estimated.

The algorithm we propose is as follows
\be\label{proto1}
u_{ext}^j(t)=g_0(cv^j(t)),\;\; V(t)\dfb [v^1,\ldots,v^N]^{\top}=-L Y_1(t).
\ee
A constant $c$ stands for the coupling gain; here, to emphasize the effect of the coupling strength, we have intentionally added $c$ that has been implicitly incorporated into the entries of $L$ in \eqref{proto} as is done in \cite{HaStSeGo12}.
The function $g_0:\R\to [0;+\infty)$ is bounded, saturating the inputs at a prescribed constant $M_0=\sup_{v\in\r} g_0(v)$. The auxiliary inputs $v^j(t)\in\R$ are introduced to emphasize the similarity between the protocols \eqref{proto} and \eqref{proto1}: in fact, the system of $N$ CFS's \eqref{cfs}, coupled through the protocol \eqref{proto1}, may be considered as a collective of appropriately modified CFS's, coupled linearly.

Hereinafter, we assume the following assumption to be valid.
\begin{assumption}\label{assum.g0}
The function $g_0(\cdot)$ is smooth, globally bounded and strictly increasing, hence $g_0'(v)>0,\,\forall v\in\R$.
Additionally, $\nu(s)\dfb \inf\limits_{|v|\le s}g_0'(v)$ decreases at infinity more slowly than linear functions, i.e. $\nu(s)\to 0$ yet $|s|\nu(s)\to+\infty$ as $s\to\pm\infty$.
\end{assumption}

Assumption~\ref{assum.g0} is satisfied by a wide class of functions, e.g.
\be\label{g0}
g_0(v)=\frac{M_0}{2}\left (1+\frac{|v|^{\rho}\textrm{sign}\, v}{1+|v|^{\rho}}\right ),\quad \textrm{with }M_0>0,\rho\in (0,1).
\ee
Note that $M_0=\sup\limits_{v\in\r} g_0(v)$ may be chosen as small as possible.

The next result shows that the modified protocol \eqref{proto1} synchronizes the systems \eqref{cfs}, provided that the coupling is sufficiently strong and the solution stays in some compact set; the crucial difference with the linear protocol \eqref{proto} is that the
existence of such a set attracting the solutions may often be proved by choosing $g_0(\cdot)$ sufficiently small.
\begin{theorem}\label{thm3}
Suppose that Assumption~\ref{assum.g0}, and the assumptions of Theorem~\ref{thm2} hold, where $\G_x\subset\R^{r_1+\ldots+r_n}$ is a \emph{compact} set and $\rho_1(x_1,u_1)=\rho_1(x_1)$.
Then for sufficiently large gain $c$, any solution of the closed-loop system such that $x^p(t)\in\G_x\,\forall t\ge t_0$ for all $p=1,\ldots,N$ and some $t_0\ge 0$ gets synchronized \eqref{sync}. Synchronization is implied by the following inequality, which holds as $c\to+\infty$
\begin{equation}\label{eq.strong}
c\nu(c\|L\|\bar y_*)\la_2>k,\; y_*\dfb\max\{|\rho_1(x_1)|:x\in\G_x\}.
\end{equation}
\end{theorem}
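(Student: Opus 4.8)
The plan is to mimic the proof of Theorem~\ref{thm2}, keeping its dissipation inequality intact but replacing the \emph{linear} estimate of the cross term by a nonlinear one adapted to \eqref{proto1}. First I would introduce, exactly as before, the joint state $\xi=[{x^1}^\top,\dots,{x^N}^\top]^\top$, the stacked output $\zeta$, and the storage $S(\xi)\dfb\frac{1}{2N}\sum_{j,m}V(x^m-x^j)$ built from the incremental storage function $V$ of \eqref{iOSPcfs}. Summing \eqref{iOSPcfs} over all ordered pairs of agents and using $\Pi=I_N-\frac1N\mathbf 1_N\mathbf 1_N^\top$ yields the \emph{same} aggregate inequality \eqref{aux0}, namely $\dot S(\xi)\le-\alpha\|(\Pi\otimes I_n)\zeta\|^2+k\|\Pi Y_1\|^2+(\Pi Y_1)^\top\Pi U_{ext}$, valid for as long as every $x^p(t)\in\G_x$; this step never uses the particular form of $U_{ext}$. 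Everything therefore reduces to showing that under \eqref{proto1} and \eqref{eq.strong} the last two terms are nonpositive.

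Next I would exploit compactness of $\G_x$ and the hypothesis $\rho_1(x_1,u_1)=\rho_1(x_1)$: along any solution staying in $\G_x$ one has $|y_1^p|=|\rho_1(x_1^p)|\le y_*$, hence $\|V\|_\infty=\|LY_1\|_\infty\le\|L\|\bar y_*$ and $\|cV\|_\infty\le c\|L\|\bar y_*$, so $g_0'$ is evaluated only on $[-c\|L\|\bar y_*,\,c\|L\|\bar y_*]$, where it is bounded below by $\nu_c\dfb\nu(c\|L\|\bar y_*)>0$. The key idea is to split the saturating map along this uniform lower slope, writing $g_0(cV)=c\nu_c V+h(cV)$ with $h(s)\dfb g_0(s)-\nu_c s$; by the choice of $\nu_c$ the residual $h$ is nondecreasing on the relevant range. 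Since the graph is balanced, $\Pi V=V=-L\Pi Y_1$, so the linear part contributes $c\nu_c(\Pi Y_1)^\top\Pi V=-c\nu_c(\Pi Y_1)^\top L(\Pi Y_1)\le-c\nu_c\la_2\|\Pi Y_1\|^2$, reproducing the linear estimate of Theorem~\ref{thm2} but with effective gain $c\nu_c$. It then remains to dispose of the residual $(\Pi Y_1)^\top\Pi h(cV)=(\Pi Y_1)^\top h(cV)$, for which I would argue nonpositivity from monotonicity of $h$ together with the diffusive (Laplacian) structure of the coupling.

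The hard part is exactly this last sign claim, $(\Pi Y_1)^\top h(cV)\le0$. A naive componentwise mean value expansion $h(cv^p)-h(cv^q)=h'(\theta_{pq})c(v^p-v^q)$ turns the term into a quadratic form $-c\,w^\top D L w$ with $w\dfb\Pi Y_1$ and $D=\diag(h'(\theta))\succeq0$; because $D$ and $L$ need not commute, $\tfrac12(DL+L^\top D)$ is not manifestly sign-definite, so the estimate cannot be closed by passivity-gain bookkeeping alone. I expect the resolution to rely genuinely on the Laplacian structure and the monotonicity of $h$: writing $h-h(0)$ as a nonnegative superposition of thresholded sign functions reduces the claim to verifying $\sum_i w_i\,\mathrm{sign}(V_i)\,\mathbb{1}(|cV_i|\ge\theta)\le0$ at each level $\theta\ge0$, where $V=-Lw$. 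Integrating the levels recovers $c\,w^\top\Pi V=-c\,w^\top Lw\le0$, so the content is that balanced diffusive coupling makes each level contribute with the right sign; the intuition is that $V_i=\sum_j a_{ij}(w_j-w_i)$ is large and positive exactly at the nodes lying below their neighbours, which carry the smaller $w_i$, so that $w_i\,\mathrm{sign}(V_i)$ is negative on average within every level set. Granting this, \eqref{aux0} becomes $\dot S(\xi)\le-\alpha\|(\Pi\otimes I_n)\zeta\|^2+(k-c\nu_c\la_2)\|\Pi Y_1\|^2\le0$ as soon as \eqref{eq.strong} holds, and Assumption~\ref{assum.g0} makes \eqref{eq.strong} attainable: since $|s|\nu(s)\to\infty$, one has $c\,\nu(c\|L\|\bar y_*)\to\infty$ as $c\to\infty$, whence $c\nu_c\la_2>k$ for all large $c$. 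With $\dot S\le0$ on $\G_x$, the bounded solution stays in the sublevel set $\scr M$, and the bounded-solution form of LaSalle's principle \cite{La60,La76}, together with limit-set detectability, forces $\dot S\equiv0$ and hence $\Pi Y_1\equiv0$, i.e.\ $y_1^1=\dots=y_1^N$, on the $\om$-limit set, yielding the state synchronization \eqref{sync} exactly as in Theorem~\ref{thm2}.
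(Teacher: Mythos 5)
You reproduce the paper's skeleton faithfully up to the aggregate inequality \eqref{aux0}, the output bound $|y_1^p|\le y_*$ (using compactness of $\G_x$ and $\rho_1=\rho_1(x_1)$), the bound $|v^p|\le\|L\|_\infty y_*$, the lower slope $\nu_c=\nu(c\|L\|_\infty \bar y_*)$, the observation that $|s|\nu(s)\to\infty$ makes \eqref{eq.strong} attainable, and the LaSalle/detectability endgame. The genuine gap is exactly where you flag it: the additive split $g_0(cV)=c\nu_c V+h(cV)$ leaves the residual term $(\Pi Y_1)^{\top}h(cV)$, whose nonpositivity you never prove, and the level-set lemma you propose in its place is \emph{false}. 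Counterexample: take the undirected (hence balanced) graph on six nodes $\{a,b_1,b_2,b_3,c_1,c_2\}$ with unit-weight edges $ab_i$ and edges $ac_j$ of weight $\de=10^{-3}$, and let $w=\Pi Y_1$ have $w_a=5$, $w_{b_i}=4$, $w_{c_j}=-8.5$ (zero mean). Then $V=-Lw$ satisfies $V_{b_i}=1$, $V_a=-(3+27\de)$, $V_{c_j}=13.5\de$, so for every level $\theta\in(13.5\de\, c,\;c]$ your quantity $\sum_i w_i\,\sign(V_i)\,\mathbb{1}(|cV_i|\ge\theta)$ equals $w_{b_1}+w_{b_2}+w_{b_3}-w_a=12-5=7>0$: the nodes pulled up by a single high neighbor can collectively outweigh that neighbor. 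Since a whole band of levels contributes with the wrong sign, $w^{\top}h(cV)>0$ for any nondecreasing $h$ whose variation concentrates in that band, so monotonicity of $h$ together with the diffusive Laplacian structure cannot deliver the sign claim, and your proof does not close.

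The paper takes a different, and residual-free, route — precisely the ``passivity-gain bookkeeping'' you set aside. It prepends the static block $\bb H_0: v\mapsto u_{ext}=g_0(cv)$ to each CFS (Fig.~\ref{cyclic0}), so the network under \eqref{proto1} becomes $N$ augmented systems \emph{linearly} coupled via \eqref{proto}. For any two solutions, the mean value theorem gives $\Delta u_{ext}(t)=cg_0'(c\theta(t))\Delta v(t)$ with $|\theta(t)|\le\|L\|_\infty y_*$; thus the nonlinearity enters each pairwise comparison not as an additive residual but as a multiplicative time-varying gain sandwiched in $[m,M]$, where $m\dfb c\,\nu(c\|L\|_\infty y_*)$ and $M\dfb c\max_v g_0'(v)$. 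The right-hand side of \eqref{iOSPcfs} is then dominated by $M\bigl(m^{-1}k|\Delta y_1|^2+\Delta v\,\Delta y_1\bigr)$, i.e.\ the augmented CFS satisfies \eqref{iOSPcfs} with $u_{ext},\alpha,k$ replaced by $v$, $\tilde\alpha=\alpha/M$, $\tilde k=k/m$, and Theorem~\ref{thm2} applies verbatim (its proof uses only \eqref{iOSPcfs}, so the extra static block is harmless): the threshold $\la_2>\tilde k=k/m$ is exactly \eqref{eq.strong}. In other words, where you extract the worst-case slope additively and must then fight a signed nonlinear remainder, the paper renormalizes the incremental passivity constants per comparison pair and never forms the quadratic form $w^{\top}DLw$ whose indefiniteness you correctly worried about. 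To repair your write-up, abandon the split $c\nu_c V+h(cV)$ and redo the estimate at the level of increments $\Delta u_{ext}=cg_0'(c\theta)\Delta v$ along these lines.
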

\begin{IEEEproof}
Along with the original CFS \eqref{cfs}, consider a modified system with a new input $v(t)$, which obeys \eqref{cfs} and the additional equation $u_{ext}=cg_0(cv)$ as shown in Fig.~\ref{cyclic0}.
The network of CFS \eqref{cfs}, coupled via protocol \eqref{proto1}, is now equivalent to the network of $N$ ``augmented'' systems, coupled via \eqref{proto}.
One may easily notice that if $x^p(t)\in\G_x$ then $|y_1^p(t)|\le y_*$ and hence $|v^p(t)|\le \|L\|_{\infty}y_*$ due to \eqref{proto};
this implies that $0\le u_{ext}^p(t)\le u_*\dfb\max g_0$. Due to mean value theorem, for any two solutions one has $\Delta u_{ext}(t)=cg_0'(c\theta(t))\Delta v$, where $|\theta(t)|\le \|L\|y_*$, and thus $0<m\dfb c\nu(c\|L\|_{\infty}y_*)\le cg_0'(c\theta)\le M\dfb c\max\limits_{v\in\R} g_0'(v)$ and thus the right-hand side of \eqref{iOSPcfs} is not greater than $M(m^{-1}k|\Delta y_1(t)|^2+\Delta v(t)\Delta y_1(t))$. This ensures that the augmented CFS also satisfies \eqref{iOSPcfs}, where $u_{ext}$, $\alpha$ and $k$ are to be replaced with respectively $v$, $\tilde\alpha\dfb M^{-1}\alpha$ and $\tilde k=k/m$.
Synchronization now follows\footnote{Formally, Theorem~\ref{thm2} was formulated for systems \eqref{cfs}. However, its proof employs only the inequality \eqref{iOSPcfs} and obviously remains valid in spite of the additional static block $\bb H_0$.} from Theorem~\ref{thm2} since $\la_2>\tilde k$ due to \eqref{eq.strong}.
\end{IEEEproof}
\begin{figure}
\centering
\begin{tikzpicture}[auto, node distance=1.7cm,>=latex']
\node [input, name=input] {};
    \node [block, right of = input, node distance=1cm] (h0) {$\mathbb H_0$};
    \node [sum, right of=h0, node distance=1.3cm] (sum) {};
    \node [block, right of=sum, node distance=1cm] (h1) {$\mathbb H_1$};
    \node [block, right of=h1, node distance=1.4cm] (h2) {$\mathbb H_2$};
    \node [right of=h2, node distance=1.4cm] (dots) {\ldots};
    \node [block, right of=dots, node distance=1.4cm] (hn) {$\mathbb H_n$};
    \node [output, right of=hn, node distance=1cm] (output) {};

\draw [->] (input) -- node [near start] {$v$} (h0);
\draw [->] (h0) -- node [near start] {$\,\,\,u_{ext}$} node [pos=0.9] {\tiny $+$}(sum);
\draw [->] (sum) -- node {$u_1$} (h1);
\draw [->] (h1) -- node [name=y1,near start]{$y_1$} (h2);
\draw [->] ($ (h1.east) + (4mm,0) $) -- ++(0,8mm);
\draw [->] (h2) -- node [near start] {$y_2$} (dots);
\draw [->] (dots) -- node [near start] {$y_{n-1}$} (hn);
\draw [->] (hn) -- node [name=y] {$y_n$}(output)
                -| ++(0,-1cm)
                -| node [pos=0.9] {\tiny $-$} (sum);

\end{tikzpicture}
\caption{Auxiliary cyclic feedback system with saturated input}\label{cyclic0}
\end{figure}
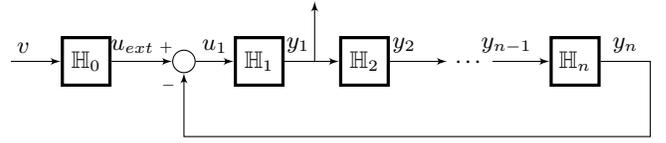

\section{Applications in coupled biochemical oscillators}\label{sec.appl}

In this section, we discuss how Theorem~\ref{thm3} allows to estimate the coupling strength, needed to synchronize biochemical
oscillators of the Goodwin type, governed by the equations
\be\label{cfs1}
\begin{array}{rclrcl}
\dot x_1&=&-f_1(x_1)+(u_{ext}-y_n),\;   y_1=g_1(x_1),\\
\dot x_2&=&-f_2(x_2)+ y_1,\qquad\qquad\;        y_2=g_2(x_2),\\
\vdots &  & \\
\dot x_{n-1}&=&-f_{n-1}(x_{n-1})+ y_{n-2},\; y_{n-1}=g_{n-1}(x_{n-1}),\\
y_n&=&g_n(y_{n-1}).
\end{array}
\ee
Therefore, the Goodwin-type oscillator is an example of the system \eqref{cfs} shown in Fig.~\ref{cyclic}, where the blocks $\bb H_1,\ldots, \bb H_{n-1}$ obey equations \eqref{h_i} with $u_1 =  u_{ext}-y_n$, $u_2= y_1$, \ldots, $u_{n}= y_{n-1}$, and the block $\bb H_n$ is static: $y_{n}(t)=g_n(u_n(t))$. The classical Goodwin's model \cite{Goodwin65} corresponds to the case where
$f_i$ are \emph{linear} and $g_n$ is the Hill nonlinearity. We emphasize that the system operates in the positive orthant, namely $x_i\ge 0$, $1\leq i \leq n-1$.

Consider now a network of $N$ identical oscillators \eqref{cfs1}, coupled via the distributed protocol \eqref{proto1}, where $g_0$ satisfies Assumption~\ref{assum.g0}.

Denoting $M_0\dfb \max\limits_{v\in\r} g_0(v)$, we adopt the following assumption.
\begin{assumption} \label{assumption}
The functions $f_i,g_i:\overline\R_+\mapsto \overline\R_+$, $i=1,\ldots,n-1$, $g_0:\R\mapsto [0,M_0], g_n:\overline \R_+\mapsto [-M_n,0]$ are smooth and strictly increasing ($f_i',g_i'>0$).
Additionally, the maps $f_i$ satisfy the condition
\be\label{Phi}
\Phi_i(0)=-\infty,\Phi_i(+\infty)=+\infty,\,\, \Phi_i(x)\dfb \int_1^x\frac{ds}{f_i(s)}, x\ge 0.
\ee
\end{assumption}
%\begin{assumption} \label{assumption2}

%The functions $f_i,g_i:\overline\R_+\mapsto \overline\R_+$, $i=1,\ldots,n-1$, $g_0:\R\mapsto [0,M_0], g_n:\overline \R_+\mapsto [-M_n,0]$ are smooth and strictly increasing ($f_i',g_i'>0$).
%For $M\dfb M_0+M_n$ the functions $h_i$, given by the recursion $h_1=f_1^{-1}$, $h_2=f_2^{-1}\circ g_1\circ h_1$,\ldots, $h_{n-1}=f_{n-1}^{-1}\circ g_{n-2}\circ h_{n-2}$, are well defined on
%$[0;M+\ve_0)$ for some $\ve_0>0$. Finally, the maps $f_i$ satisfy the condition
%\be\label{Phi}
%\Phi_i(0)=-\infty,\Phi_i(+\infty)=+\infty,\,\, \Phi_i(x)\dfb \int_1^x\frac{ds}{f_i(s)}, x\ge 0.
%\ee
%\end{assumption}

Condition \eqref{Phi} holds, for instance, for linear functions $f_i(x)=a_i x$, $a_i>0$ and Mikhaelis-Menten type nonlinear functions.
Under Assumption \ref{assumption}, any solution of the cyclic feedback system with saturated input, starting strictly inside the positive orthant, remains positive and, under additional assumptions,
is ultimately bounded.
\begin{lemma}\label{lem.tech}
   Let Assumption~\ref{assumption} hold and $u_{ext}(t)\in [0;M_0]$ is defined for $t\in\Delta\dfb [0;\beta)$. Then for any initial condition $x_i(0)\in\R_+$ ($i= 1, \ldots, n-1$) the
solution $x_1(t),\ldots,x_{n-1}(t)$, $y_1(t),\ldots,y_n(t)$ exists on $\Delta$ and remains positive $x_i(t)\in \R_+\,\forall t\in\Delta.$
If $\beta=\infty$ and the functions $h_i$, given by the recursion $h_1=f_1^{-1}$, $h_2=f_2^{-1}\circ g_1\circ h_1$,\ldots, $h_{n-1}=f_{n-1}^{-1}\circ g_{n-2}\circ h_{n-2}$, are well defined on
$[0;M+\ve_0)$, where $M\dfb M_0+M_n$ and $\ve_0>0$, then the solution is ultimately bounded
\be\label{limsup}
\varlimsup\limits_{t\to+\infty} x_i(t)\le \bar x_i\,\quad \forall i=1,\ldots,n-1\, \forall x_i(0),
\ee
where $\bar x_i\dfb h_i(M)$ are \emph{independent} of the initial conditions.
\end{lemma}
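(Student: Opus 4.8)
The plan is to treat the two claims separately, exploiting the cascade structure of \eqref{cfs1} together with scalar ODE comparison arguments. The first observation I would record is that the condition $\Phi_i(0)=-\infty$ forces $f_i(0)=0$: since $f_i$ is increasing and nonnegative, $f_i(0)>0$ would make $1/f_i$ bounded near $0$ and hence $\int_0^1 ds/f_i(s)$ finite, contradicting $\Phi_i(0)=-\infty$. Thus $\int_{0^+}ds/f_i(s)=+\infty$, which is precisely the statement that the autonomous flow $\dot z=-f_i(z)$ started at any $z(0)>0$ cannot reach the origin in finite time (separating variables, $\int_{z(t)}^{z(0)}ds/f_i(s)=t\to\infty$ as $z(t)\to 0^+$). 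The second observation is that the input of the leading block is automatically confined: since $g_n\le 0$ and $u_{ext}\in[0,M_0]$ while $-y_n=-g_n(y_{n-1})\in[0,M_n]$, one has $u_{ext}-y_n\in[0,M]$ with $M=M_0+M_n$, and for $i\ge 2$ the input $y_{i-1}=g_{i-1}(x_{i-1})$ is nonnegative whenever $x_{i-1}\ge 0$.

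For positivity (Claim 1) I would consider the maximal interval $[0,\tau)\subseteq\Delta$ on which $x_i(t)>0$ holds \emph{simultaneously} for all $i$; this device breaks the apparent circular dependence caused by the feedback term $y_n$. On $[0,\tau)$ every block receives a nonnegative input, so $\dot x_i\ge -f_i(x_i)$; comparing $x_i$ with the solution $z_i$ of $\dot z_i=-f_i(z_i)$, $z_i(0)=x_i(0)$, the comparison principle gives $x_i(t)\ge z_i(t)$, and by the nonintegrability above $z_i(\tau^-)>0$. Were $\tau$ strictly inside the existence interval, continuity would force some $x_i(\tau)=0$, contradicting $x_i(\tau^-)\ge z_i(\tau^-)>0$; hence positivity persists as long as the solution exists. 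Global existence on $\Delta$ then follows from \emph{a priori} bounds: $\dot x_1\le M$ gives at most linear growth, whence $y_1=g_1(x_1)$ is bounded on each compact subinterval, then $\dot x_2\le y_1$ is bounded, and so on up the cascade; combined with the positive lower bounds $z_i$, the trajectory remains in a compact subset of the open orthant on every compact $[0,T]\subset\Delta$, so the maximal solution cannot terminate before $\beta$.

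For ultimate boundedness (Claim 2, $\beta=\infty$) I would argue by induction on $i$ using the scalar comparison $\dot x_i\le -f_i(x_i)+b_i$ with constant $b_i$, noting that $\dot w=-f_i(w)+b_i$ has the globally attracting equilibrium $f_i^{-1}(b_i)$ (the right-hand side is positive below and negative above it since $f_i$ is increasing), so $\limsup_t x_i\le f_i^{-1}(b_i)$. For $i=1$, $b_1=M$ yields $\limsup x_1\le f_1^{-1}(M)=h_1(M)=\bar x_1$. For the step, from $\limsup x_{i-1}\le\bar x_{i-1}$ fix a small $\varepsilon>0$; eventually $y_{i-1}=g_{i-1}(x_{i-1})\le g_{i-1}(\bar x_{i-1}+\varepsilon)$, so $\limsup x_i\le f_i^{-1}\big(g_{i-1}(\bar x_{i-1}+\varepsilon)\big)$. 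Writing $\bar x_{i-1}+\varepsilon=h_{i-1}(M')$ with $M'\in(M,M+\varepsilon_0)$ --- possible because each $h_{i-1}$ is continuous, strictly increasing and, by hypothesis, defined on $[0,M+\varepsilon_0)$ --- the right-hand side equals $h_i(M')$ by the very definition $h_i=f_i^{-1}\circ g_{i-1}\circ h_{i-1}$. Letting $\varepsilon\to 0$ (so $M'\to M$) and invoking continuity of $h_i$ gives $\limsup x_i\le h_i(M)=\bar x_i$, completing the induction.

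I expect the main obstacle to be organizational rather than computational: correctly decoupling the positivity argument from the feedback loop (the simultaneous-interval device) and, above all, propagating the $\limsup$ bounds cleanly through the cascade. The delicate point in the latter is that the comparison constant $g_{i-1}(\bar x_{i-1}+\varepsilon)$ must lie inside the domain where $f_i^{-1}$ is available; this is exactly what the ``extra room'' $\varepsilon_0>0$ in the hypothesis ``$h_i$ well defined on $[0,M+\varepsilon_0)$'' provides, allowing the $\varepsilon$-inflated bound to be absorbed before passing to the limit. The nonintegrability condition $\Phi_i(0)=-\infty$ is the indispensable ingredient guaranteeing that the boundary $x_i=0$ is never reached in finite time.
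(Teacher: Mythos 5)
Your proposal is correct and follows essentially the same route as the paper: your maximal simultaneous-positivity interval plus the comparison with $\dot z=-f_i(z)$ is exactly the paper's bound $x_i(t)\ge \Phi_i^{-1}(\Phi_i(x_i(0))-t)$ exploiting $\Phi_i(0)=-\infty$, and your inductive $\limsup$ cascade via the attracting equilibrium of $\dot w=-f_i(w)+b_i$ is the paper's implication \eqref{limsup1} applied block by block with $h_i=f_i^{-1}\circ g_{i-1}\circ h_{i-1}$. Your explicit bookkeeping of the $\ve$-inflated bound through $M'\in(M,M+\ve_0)$ makes precise the role of $\ve_0$ that the paper leaves implicit when it passes $\delta,\ve\to 0$, but this is a presentational rather than a substantive difference.
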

\begin{IEEEproof}
We now prove that $x(t)$ exists and is positive on $\Delta$. Since $x_i(0)>0$, because of continuity,  there exists a maximal interval $\Delta'=[0;\beta')\subseteq\Delta$, such that $x_i(t)>0$ for all $t\in \Delta'$ and $i=1,\ldots,n-1$. From Assumption~\ref{assumption}, for $t\in\Delta'$ it always holds that $u_i(t)\ge 0$
since $y_n(t)\le 0$, $u_1 =  u_{ext}-y_n$, $u_2= y_1$, \ldots, $u_{n}= y_{n-1}$. Therefore
$\frac{d}{dt}\Phi_i(x_i(t))=\dot x_i(t)/f_i(x_i(t))\ge -1$, which implies that $x_i(t)>\Phi_i^{-1}(\Phi_i(0)-t)>0$ for
any $t\in\Delta'$. Hence, the solution cannot escape from $\R_+$ within $\Delta'$. Furthermore, if $\beta'<\infty$ then $x_i(t)>\delta_i>0$ and hence $f_i(x_i(t))\ge\upsilon_i>0$ as $t\in\Delta'$.
This implies that $\Phi_i(x_i(t))$ and hence $x_i(t)$ are bounded from above, i.e. the solution cannot grow unbounded in finite time.
This, according to the definition of $\Delta'$, implies that $\Delta'=\Delta$.

To prove \eqref{limsup}, we show first that for any solution $x_i(t)>0,u_i(t)>0$ of the subsystem \eqref{h_i} the following implication holds:
 \be\label{limsup1}
\mu_i\dfb \varlimsup\limits_{t\to+\infty} u_i(t)<\sup\limits_{x>0}f_i(x)\Longrightarrow
\varlimsup\limits_{t\to\infty}x_i(t)\le f_i^{-1}(\mu_i).
\ee
Indeed, let $\delta,\ve>0$ be so small that $\mu_i+\delta+\ve<\sup f_i$. From the definition of the upper limit, a number $T_0$ exists such that
$u_i(t)\le \mu_i+\delta$ for all $t\ge T_0$. From the facts that $\dot x_i(t)<-\ve$ if $t\ge T_0$ and $x_i(t)>\xi\dfb f_i^{-1}(\mu_i+\delta+\ve)\Leftrightarrow f_i(x_i(t))>\mu_i+\delta+\ve\ge u_i(t)+\ve$, we know the following two statements hold: (i) if $x_i(T_1)\le\xi$ for some $T_1\ge T_0$, then $x_i(t)\le\xi\,\forall t\ge T_1$, and (ii) such a $T_1$ necessarily exists; that is, if $x_i(T_0)\le \xi$, one can take $T_1\dfb T_0$, otherwise, $T_1\le (x_i(T_0)-\xi)/\ve$ is the first time instant after $T_0$ at which $x_i(T_1)=\xi$. Hence, $\varlimsup\limits_{t\to+\infty} x_i(t)\le f_i^{-1}(\mu_i+\delta+\ve)$, from which \eqref{limsup1} follows by passing to the limit $\delta,\ve\to 0$.

Then \eqref{limsup} can be easily proved using \eqref{limsup1}. From the assumptions, the input of the block $\bb H_1$ is given by $u_1(t)=u_{ext}(t)-y_n(t)\le M$. Therefore, $\varlimsup\limits_{t\to\infty} x_1(t)\le f_1^{-1}(M)=h_1(M)$ and hence $\varlimsup\limits_{t\to\infty} y_1(t)\le g_1\circ h_1(M)$.
Invoking \eqref{limsup1} for the second block $\bb H_2$ with input $u_2=y_1$, we obtain $\varlimsup\limits_{t\to\infty} x_2(t)\le f_2^{-1}\circ g_1\circ h_1(M)=h_2(M)$. Iterating this procedure for $\bb H_3,\ldots, \bb H_{n-1}$, the inequalities \eqref{limsup} are proved.
\end{IEEEproof}

Lemma~\ref{lem.tech} gives only the simplest condition of ``restricted'' input-to-state stability (ISS) \cite{SontagWang95}, that is, the existence of explicit ultimate bounds for the state vector of CFS provided that its input is sufficiently small and positive. This condition appears to be conservative for some Goodwin-type oscillators, as will be discussed below.
It can be further refined (with tightening the bounds $\bar x^i$) by using the monotonicity-type arguments from \cite{Al77},\cite{EncisoSmithSontag06}. To establish the ISS property for general CFS
(with explicit bounds $\bar x^i$) remains an open non-trivial problem, which is beyond the scope of this technical note. However, the following simple lemma shows that in practice \eqref{limsup}
holds as $u_{ext}(t)$ is sufficiently small, provided that the oscillators have a globally stable attractor (e.g. limit cycle).
\begin{lemma}\label{lem.tech2}
Suppose that any solution of the system \eqref{cfs1} with $u_{ext}\equiv 0$, starting at a compact $K\subset\R_+^{n-1}$, converges to some attractor $K_0\subseteq Int\,K$: $dist(x(t),K_0)\to 0$ as $t\to\infty$. Then for any $\delta>0$ there exist $\ve_0=\ve_0(K_0,K)>0$ such that any solution of \eqref{cfs1}, starting at $x(0)\in K$ and associated with input $0\le u_{ext}(t)\le\ve_0$, converges to the attractor's $\delta$-neighborhood. Precisely, there exists $T_0=T_0(\delta,K,K_0)$ such that $dist(x(t),K_0)<\delta$ as $t\ge T_0$.
\end{lemma}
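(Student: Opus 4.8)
The plan is to treat the lemma as a \emph{stability under persistently acting perturbations} (total stability) statement. The external input enters \eqref{cfs1} only additively through the first equation, so the perturbed dynamics read $\dot x=F(x)+e_1u_{ext}(t)$, where $F$ is the vector field of the unperturbed system ($u_{ext}\equiv 0$) and $e_1=(1,0,\ldots,0)^{\top}$; since $0\le u_{ext}(t)\le\ve_0$, the perturbation has norm at most $\ve_0$. The idea is to combine the \emph{uniform} attraction of the unperturbed flow toward $K_0$ over the compact set $K$ with the fact that, over any \emph{fixed finite} time horizon, a solution of the perturbed system stays arbitrarily close to the corresponding unperturbed solution once $\ve_0$ is small enough. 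I read the hypothesis ``converges to an attractor $K_0$'' in the usual sense of asymptotic stability, so that $K_0$ is both Lyapunov stable and uniformly attracting on $K$; this is the structure the argument exploits.

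First I would fix $\delta>0$ (shrinking it so that the closed $\delta$-neighborhood of $K_0$ lies in $Int\,K$, the larger case being trivial) and record two facts about the unperturbed flow. By Lyapunov stability of $K_0$, there is $\delta_1\in(0,\delta/2)$ such that every unperturbed trajectory starting in $O_{\delta_1}$ stays in $O_{\delta/2}$ for all future time, where $O_{s}\dfb\{x:dist(x,K_0)<s\}$ denotes the $s$-neighborhood of $K_0$. By uniform attraction over the compact $K$ (which follows from asymptotic stability together with compactness of $K$), there is a single time $T_1$ such that every unperturbed trajectory issued from $K$ satisfies $dist(x(t),K_0)<\delta_1/2$ for all $t\ge T_1$. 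Because the flow map is continuous and $K\times[0,T_1]$ is compact, all these unperturbed trajectories remain, on $[0,T_1]$, inside a compact set $\tilde K\subset\R_+^{n-1}$ on a neighborhood of which $F$ is Lipschitz with some constant $L$ (the maps $f_i,g_i$ being smooth).

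Next I would quantify the finite-horizon closeness. For the perturbed and unperturbed solutions sharing an initial point, Gronwall's inequality gives $|x^{\ve}(t)-x^{0}(t)|\le \ve_0 L^{-1}(e^{Lt}-1)$ as long as both stay in the Lipschitz region; choosing $\ve_0$ so small that this bound is below $\delta_1/2$ on the whole window $[0,T_1]$ keeps $x^{\ve}$ within $\tilde K$ (a standard continuation bootstrap) and within $\delta_1/2$ of $x^{0}$ throughout. I would then run a windowed iteration with step $T_1$. On $[0,T_1]$ the perturbed solution ends within $\delta_1$ of $K_0$, since $x^{0}(T_1)\in O_{\delta_1/2}$. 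On each subsequent window $[kT_1,(k+1)T_1]$, compared with the unperturbed trajectory $\bar x^{0}$ restarted from $x^{\ve}(kT_1)\in O_{\delta_1}\subseteq K$, Lyapunov stability keeps $\bar x^{0}$ inside $O_{\delta/2}$, so $x^{\ve}$ stays inside $O_{\delta}$ on the whole window, while uniform attraction forces $\bar x^{0}((k+1)T_1)\in O_{\delta_1/2}$, whence $x^{\ve}((k+1)T_1)\in O_{\delta_1}$ again. By induction $x^{\ve}(t)\in O_{\delta}$ for all $t\ge T_1$, so $T_0\dfb T_1$ proves the claim; positivity of $x^{\ve}$ throughout is guaranteed by Lemma~\ref{lem.tech} since $u_{ext}\ge 0$.

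The main obstacle I anticipate is upgrading the pointwise convergence in the hypothesis to the \emph{uniform} attraction on $K$ used above, and, jointly, extracting Lyapunov stability of $K_0$. If one only assumes attraction of each individual trajectory, neither uniformity nor stability is automatic; the clean route is to take ``attractor'' to mean an asymptotically stable compact invariant set, for which uniform attraction over compacts contained in the basin is standard. The remaining technical care is merely the continuation argument ensuring the perturbed solution never leaves the compact Lipschitz region $\tilde K$ during a window, which is immediate from the smallness of the Gronwall bound.
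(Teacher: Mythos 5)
Your proposal is correct and takes essentially the same route as the paper's proof: a uniform attraction time over the compact $K$, a finite-horizon closeness bound (Gronwall/continuous dependence) making perturbed solutions track unperturbed ones, and a windowed induction restarting at multiples of that time from points that have re-entered $K$. The differences are only bookkeeping --- the paper works directly with a uniform ``enter-and-stay'' time $T_0$ and the doubled window $[0;2T_0]$, asserting the perturbation bound without the Gronwall computation, whereas you decompose into Lyapunov stability plus uniform attraction and, helpfully, make explicit the point the paper glosses over with ``since $K$ is compact'': the literal pointwise-convergence hypothesis must indeed be read as asymptotic stability of $K_0$ (attractor in the standard sense) for the uniform attraction time to exist.
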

\begin{IEEEproof}
Without loss of generality, let $\delta>0$ be so small that $K_{\delta}=\{x:dist(x(t),K_0)<\delta\}\subset K$. Since $K$ is compact, there exists $T_0$ such that $x(t)\in K_{\delta}$ as $t\ge T_0$ under $u_{ext}\equiv 0$ for any solution, starting at $x(0)\in K$. Hence, if $0\le u_{ext}(t)\le \ve_0\,\forall t\in [0;2T_0]$ and $\ve_0>0$ is sufficiently small, one can guarantee that $x(t)\in K_{\delta}\subset K$ at least for $t\in [T_0;2T_0]$ \emph{independent} of the initial condition in $K$ and concrete $u_{ext}$. Applying this for $x(0)=x(T_0)\in K$ and shifted input
$\tilde u_{ext}(s)=u_{ext}(T_0+s)$, one shows that $x(t)\in K_{\delta}\subset K$ as $t\in [2T_0;3T_0]$, and so on.
\end{IEEEproof}

Assumptions of Lemma~\ref{lem.tech2} in general hold for biologically realistic models, where the oscillating concentrations of the reagents are confined to some
(roughly known) intervals and the limit cycles are found experimentally or via numerical simulations. Lemma~\ref{lem.tech2} states that knowledge of the attractor allows to estimate the solutions of the network of coupled oscillators, provided that the control inputs are sufficiently small.
However, unlike Lemma~\ref{lem.tech}, Lemma~\ref{lem.tech2} does not give the explicit dependence between $\bar x^i$ and the value of $M_0\max u_{ext}$, but only allows to find the limit of
$\bar x^i$ as $M_0\to 0$ (and $x(0)\in K$).

We now return to the dynamics of coupled CFS \eqref{cfs1} under ``saturated'' protocol~\eqref{proto1} and prove that if the ultimate boundedness \eqref{limsup} holds under ``weak'' non-negative inputs $u_{ext}(t)$ (for instance, condition from Lemma~\ref{lem.tech} or Lemma~\ref{lem.tech2} is valid), then \eqref{proto1} guarantees synchronization for small $M_0$ is small, if the coupling is sufficiently strong: $c>c_*=c_*(\bar x_i)$ (and $c_*$ can be found explicitly).

For convenience, we introduce the following assumption.
\begin{assumption}\label{ass.ultimate}
For some set $\G\subset\r_+^{n-1}$ there exist such $M_0>0$ such that under any input $u_{ext}(t)\in [0;M_0]\,\forall t\ge 0$
solutions of \eqref{cfs1}, starting at $x(0)\in \G$, satisfy \eqref{limsup}, where the bounds $\bar x^i$ are uniform over all $x(0)$ and $u_{ext}(\cdot)$.
\end{assumption}

Assumption~\ref{ass.ultimate} can be provided, for instance, by the conditions from Lemma~\ref{lem.tech} or Lemma~\ref{lem.tech2}. If it holds, any solution which starts at $\G$
enters in finite time the hypercube $\scr B_{\ve}=\{x: 0 < x_i < \bar x_i^{\ve}\dfb \bar x_i+\ve,\, 1\le i<n\}$ and remains there.
Since the blocks $\bb H_i$ are iOSP($(\gamma_i^{\ve})^{-1},[0;\bar x_i^{\ve}]$) for $i=1,\ldots,n$ by virtue of Lemma~\ref{true-lem}, where
the secant gains are given respectively by
\be\label{gamma-e}
\gamma_i^{\ve}\dfb \max_{x_i\in [0,\bar x_i^{\ve}]}\frac{g_i'(x_i)}{f_i'(x_i)},\,i<n,\;\gamma_n^{\ve}\dfb \max_{y\in [0,g_{n-1}(\bar x_i^{\ve})]} g_n'(y).
\ee
As follows from Theorem~1 in \cite{HaStSeGo12} (see discussion in Section~II-B), the CFS \eqref{cfs1} is iOFP(-$k_{\ve}$,$\scr B_{\ve}$) and, moreover, the inequality holds
\be\label{aux1}
\dot {V}_{\ve}(\Delta x(t))+\alpha_{\ve} \|\Delta y(t)\|^2\le k_{\ve}(\Delta y_1(t))^2+\Delta y_1(t)\Delta u_{ext}(t),
\ee
for any two solutions of \eqref{cfs1}, staying in $\scr B_{\ve}$. Here $k_{\ve}=-\frac{1}{\gamma_1^{\ve}}+\gamma_2^{\ve}\gamma_3^{\ve}\ldots\gamma_{n}^{\ve}$. Notice that $\ve>0$ can be arbitrarily small, \emph{independent} of the initial condition. If $x(t)\in \scr B_{\ve}$, then $0<y_1(t)\le \bar y_*^{\ve}\dfb g_1(x_1^{\ve})$.

\begin{theorem}\label{thm-my}
Let $N$ identical CFS \eqref{cfs1} satisfy Assumptions~\ref{assumption} and ~\ref{ass.ultimate} and be coupled via \eqref{proto1}.
Suppose that $g_0$ satisfies Assumption~\ref{assum.g0}, $g_0(v)\le M_0\,\forall v\in\r$ and the coupling is sufficiently strong
\be\label{synchronize}
c\nu(c\|L\|\bar y_*^0)\la_2>-\frac{1}{\gamma_1^0}+\gamma_2^0\gamma_3^0\ldots\gamma_{n}^0,\quad \gamma_i^0\dfb \gamma_i^{\ve}|_{\ve=0},y_*^0\dfb y_*^{\ve}|_{\ve=0}.
\ee
Under such coupling, any solutions $(x^j(t))_{j=1}^N$, starting at $x^p(0)\in \G$, remain in the positive orthant $x_i^j(t)>0$ and synchronizes \eqref{sync}.
\end{theorem}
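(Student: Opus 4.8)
The plan is to reduce Theorem~\ref{thm-my} to the already-established Theorem~\ref{thm3} by showing that under Assumptions~\ref{assumption} and~\ref{ass.ultimate} the solutions enter and remain in a compact set on which the CFS is iOFP with an explicit gain, so that the abstract synchronization result applies. The key insight is that Theorem~\ref{thm3} already does the heavy lifting (translating the saturating protocol into a linearly-coupled augmented system and applying the LaSalle-based argument of Theorem~\ref{thm2}); what remains is to verify its hypotheses in the concrete Goodwin setting, namely the existence of a compact positively invariant domain $\G_x$ together with a bound $k$ coming from the secant condition~\eqref{sec}.

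First I would invoke Lemma~\ref{lem.tech} to establish positivity: since each external input satisfies $0\le u_{ext}^j(t)=g_0(cv^j(t))\le M_0$, and since the feedback $y_n=g_n(y_{n-1})\le 0$ keeps all block inputs $u_i\ge 0$, the positivity part of Lemma~\ref{lem.tech} guarantees $x_i^j(t)>0$ for all $t$, $i$, $j$. Next, under Assumption~\ref{ass.ultimate} any solution starting at $x^p(0)\in\G$ enters in finite time the hypercube $\scr B_{\ve}=\{x:0<x_i<\bar x_i+\ve\}$ and remains there; this is the compact domain $\G_x$ required by Theorem~\ref{thm3}. On $\scr B_{\ve}$ the blocks $\bb H_i$ are iOSP with the finite secant gains $\gamma_i^{\ve}$ of~\eqref{gamma-e}, so by the criterion of~\cite{HaStSeGo12} (recorded in~\eqref{sec} and~\eqref{aux1}) the CFS is iOFP($-k_{\ve}$,$\scr B_{\ve}$) with $k_{\ve}=-\tfrac{1}{\gamma_1^{\ve}}+\gamma_2^{\ve}\cdots\gamma_n^{\ve}$. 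Then I would note that on $\scr B_{\ve}$ one has $|y_1|=|g_1(x_1)|\le \bar y_*^{\ve}$, so the quantity $y_*$ appearing in~\eqref{eq.strong} is precisely $\bar y_*^{\ve}$. With all the hypotheses of Theorem~\ref{thm3} in force on the compact set $\scr B_{\ve}$, its conclusion gives synchronization as soon as the strength inequality $c\nu(c\|L\|\bar y_*^{\ve})\la_2>k_{\ve}$ holds.

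Finally I would pass to the limit $\ve\to 0$. Since $\gamma_i^{\ve}\to\gamma_i^0$ and $\bar y_*^{\ve}\to \bar y_*^0$ as $\ve\to 0^+$ by continuity of $f_i',g_i',g_n'$ and $g_1$, the threshold inequality~\eqref{eq.strong} for $k_{\ve}$ converges to the displayed condition~\eqref{synchronize}. Hence if~\eqref{synchronize} holds strictly, one may choose $\ve>0$ small enough that $c\nu(c\|L\|\bar y_*^{\ve})\la_2>k_{\ve}$ still holds, and Theorem~\ref{thm3} yields synchronization~\eqref{sync} on $\scr B_{\ve}$. Because the $\om$-limit set lies inside $\scr B_{\ve}$ for \emph{every} such $\ve$, the synchronized state is unaffected by the arbitrarily small enlargement, completing the proof.

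The main obstacle I anticipate is not any single calculation but the bookkeeping of the two nested smallness parameters, the saturation level $M_0$ (which controls the size of $\scr B_{\ve}$ through Assumption~\ref{ass.ultimate} and hence the gains $\gamma_i^0$) and the coupling gain $c$ entering $\nu(c\|L\|\bar y_*^0)$. One must check that~\eqref{synchronize} is actually \emph{achievable}: because Assumption~\ref{assum.g0} forces $\nu(s)\to 0$ while $s\nu(s)\to\infty$, the left-hand side $c\nu(c\|L\|\bar y_*^0)$ grows without bound as $c\to\infty$, so the inequality can always be met by increasing $c$ once $M_0$ (and thus the right-hand side) is fixed. Verifying this growth behaviour, and confirming that shrinking $M_0$ does not inflate the right-hand side faster than the left can grow, is the delicate point; everything else follows mechanically from Lemmas~\ref{lem.tech}--\ref{lem.tech2} and Theorem~\ref{thm3}.
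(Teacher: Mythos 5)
Your proposal is correct and takes essentially the same route as the paper's own proof: confine solutions to the compact hypercube $\scr B_{\ve}$ via Assumption~\ref{ass.ultimate} (with positivity from Lemma~\ref{lem.tech}), apply the iOFP inequality \eqref{aux1} with the secant gains \eqref{gamma-e} so that Theorem~\ref{thm3} delivers synchronization, and use continuity of $\gamma_i^{\ve}$ and $\bar y_*^{\ve}$ at $\ve=0$ to replace the limiting condition \eqref{synchronize} by its strict small-$\ve$ counterpart. The paper's proof is simply a terser version of yours, leaving the verification of Theorem~\ref{thm3}'s hypotheses and the achievability of \eqref{synchronize} for large $c$ implicit.
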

\begin{IEEEproof}
Since $\gamma_i^{\ve}$ is continuous at $\ve=0$, inequality \eqref{synchronize} remains valid after replacing
$\gamma_i^0\mapsto \gamma_i^{\ve}$ and $\bar y_1^0\mapsto \bar y_1^{\ve}$ when $\ve>0$ is small.
The claim now follows from \eqref{aux1} and Theorem~\ref{thm3}.
\end{IEEEproof}

\section{Example: synchronization of circadian clocks}

In this section we demonstrate our synchronization criterion for oscillators, describing the cellular circadian clocks \cite{GoBeWaKrHe05,LoWeKrHe08}. The main circadian pacemaker in mammal
is controlled by the neurons of SCN (suprachiasmatic nucleus, a zone in hypothalamus). Within each cell (indexed $1$ through $N$), a clock gene mRNA (X) produces a clock protein (Y) which, in turn, activates a transcriptional inhibitor (Z), closing a negative feedback loop \cite{LoWeKrHe08}; their dynamics are given by
\be\label{gonze}
\begin{split}
\dot X^i(t)&=\nu_1\frac{K_1^n}{K_1^n+(Z^i)^n}-\nu_2\frac{X^i}{K_2+X^i}+u_{ext}^i\\
\dot Y^i(t)&=k_3X^i-\nu_4\frac{Y^i}{K_4+Y^i}\\
\dot Z^i(t)&=k_5Y^i-\nu_6\frac{Z^i}{K_6+Z^i},\quad i=1,2,\ldots,N\\
\end{split}
\ee
Here $u_{ext}^i$ are some external inputs. In \cite{GoBeWaKrHe05,LoWeKrHe08,VaHe11} the networks with \emph{mean-field couplings} are considered, where $u_{ext}^1=\ldots=u_{ext}^N$ is a common input, which is positive, bounded and depend on the average concentration of neurotransmitting peptide in the extracellular domain, depending in its turn on $X^1,\ldots,X^N$.
We consider synchronization of oscillators \eqref{gonze} under \emph{diffusive} protocol \eqref{proto1}. Unlike the mean-field control, which remains oscillatory when the synchronization is established, under protocol \eqref{proto1} the inputs stabilize at the constant value $u_{ext}^i(t)\to const=g(0)\,\forall i$ as $t\to\infty$.

It is confirmed experimentally \cite{GoBeWaKrHe05} that an individual circadian clock has a stable limit cycle in the positive octant $X,Y,Z>0$,
and the corresponding oscillation period lies between 20 and 27 hours. This means that for realistic sets of parameters in \eqref{gonze} Lemma~\ref{lem.tech2} and Theorem~\ref{thm-my}
work, stating that for any compact set $K\subset\R_+^3$ one can find $M_0=M_0(K)>0$ (sufficiently small) and $c>0$ (sufficiently large), such that the protocol \eqref{proto1} synchronizes oscillators \eqref{gonze} starting at $(X^i(0),Y^i(0),Z^i(0))\in K$. For a special set of parameters $n,k_i,K_i,\nu_i$, found in \cite{LoWeKrHe08}, Lemma~\ref{lem.tech} is applicable which gives explicit estimates for $M_0$ and explicit bounds for the solutions. We simulated dynamics of a more complicated model \cite{GoBeWaKrHe05}, where Lemma~\ref{lem.tech} is unapplicable (in the notation of Lemma~\ref{lem.tech}, the function $h_1=f_1^{-1}$ is not defined on $[0;M_n]$) and $M_0$, $c$ are to be found numerically.
Note that the incremental passivity of \eqref{gonze} in the whole positive orthant does not follow from the criterion in \cite{HaStSeGo12}, unlike the Goodwin oscillator in the example from \cite{HaStSeGo12}, so synchronization of CFS \eqref{gonze} under strong \emph{linear} couplings remains an open problem.

 We simulate the dynamics of $N=10$ all-to-all coupled oscillators \eqref{gonze} with the parameters from \cite{GoBeWaKrHe05}: $\nu_1=0.7 nM/h$; $K_1=1nM$; $n=4$; $\nu_2=0.35nM/h$; $K_2=nM$; $k_3=0.7/h$; $\nu_4=0.35 nM/h$; $K_4=1nM$; $k_5=0.7/h$; $\nu_6=0.35nM/h$; $K_6=1 nM$, which correspond to the oscillation period $\approx 23.5h$.
We choose $g_0(v)$ in the form \eqref{g0}, where $M_0=0.0005$ and $\rho=0.9$. We simulate the dynamics for $c=0$, $c=1$, $c=10$ and $c=100$. Oscillators are not synchronous for $c$ being small, however, the synchronization emerges as $c$ increases, confirming thus Theorem~\ref{thm-my}.

\begin{figure}
\begin{subfigure}[b]{0.5\columnwidth}
\includegraphics[width = \columnwidth]{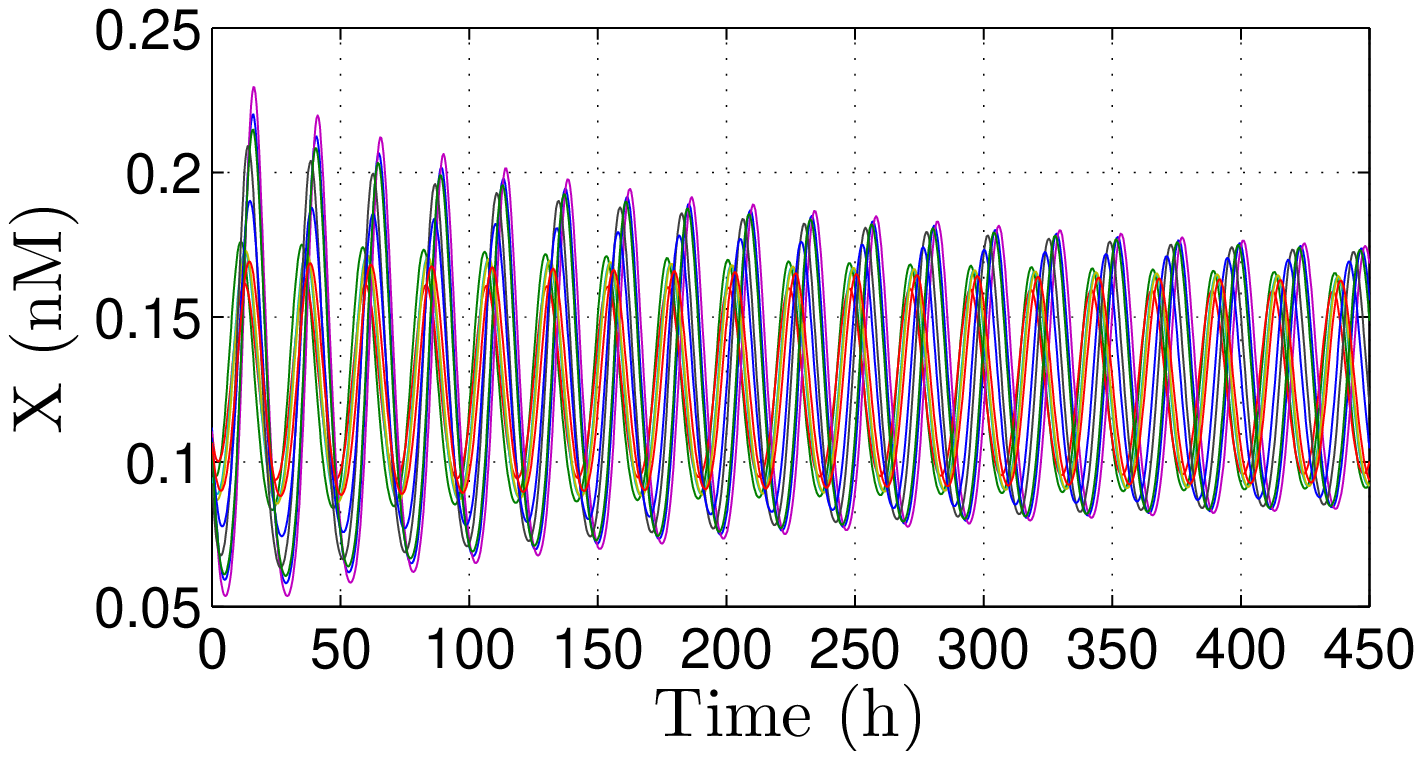}
\caption*{$c=0$}
\end{subfigure}
\begin{subfigure}[b]{0.5\columnwidth}
\includegraphics[width = \columnwidth]{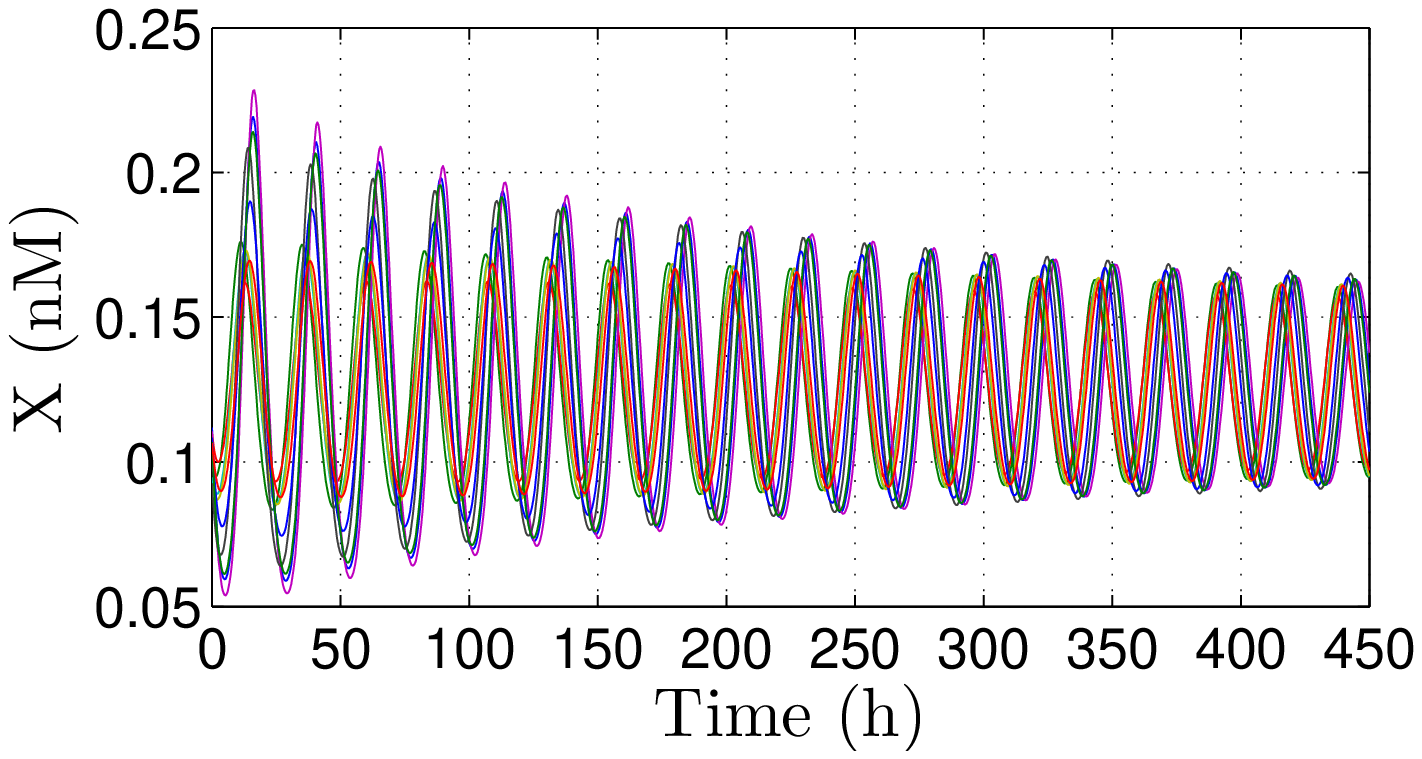}
\caption*{$c=1$}
\end{subfigure}
\begin{subfigure}[b]{0.5\columnwidth}
\includegraphics[width = \columnwidth]{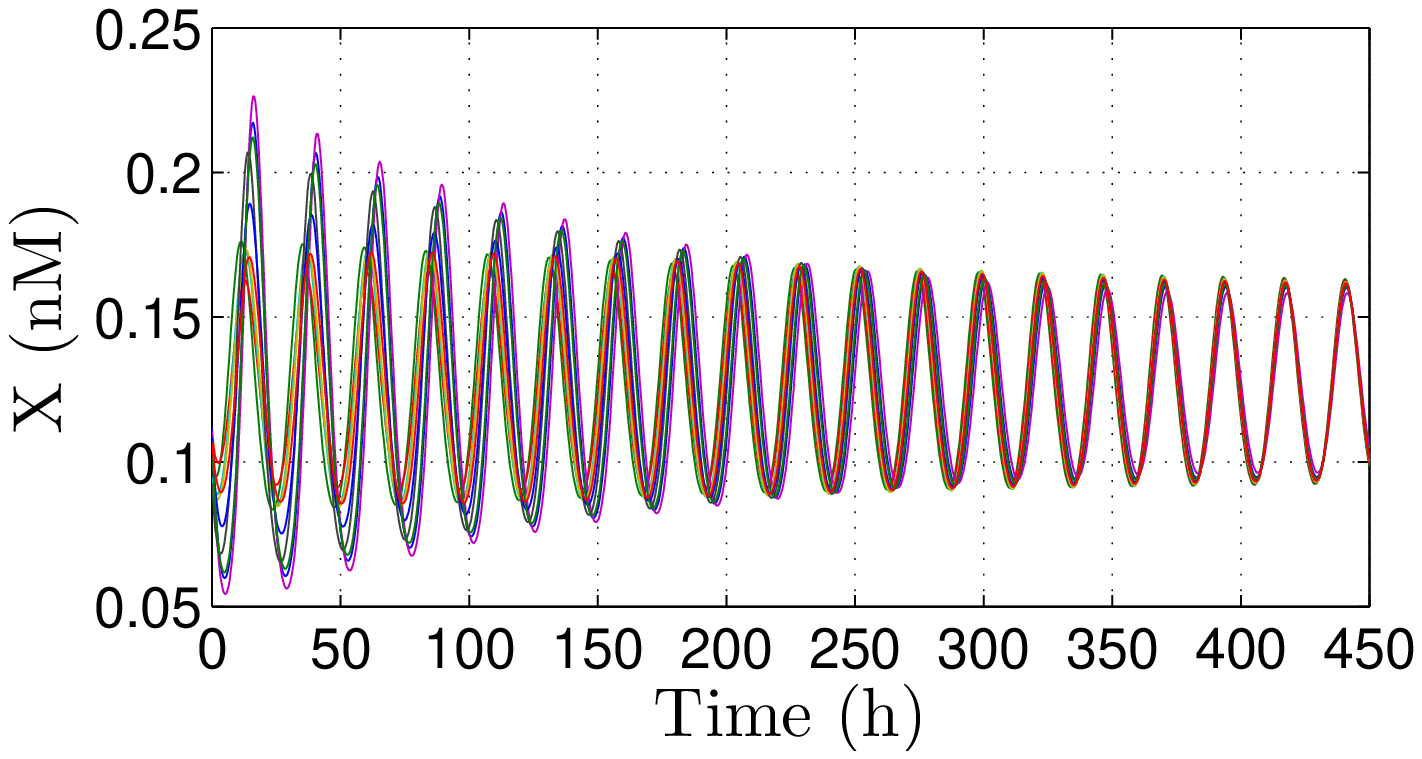}
\caption*{$c=10$}
\end{subfigure}
\begin{subfigure}[b]{0.5\columnwidth}
\includegraphics[width = \columnwidth]{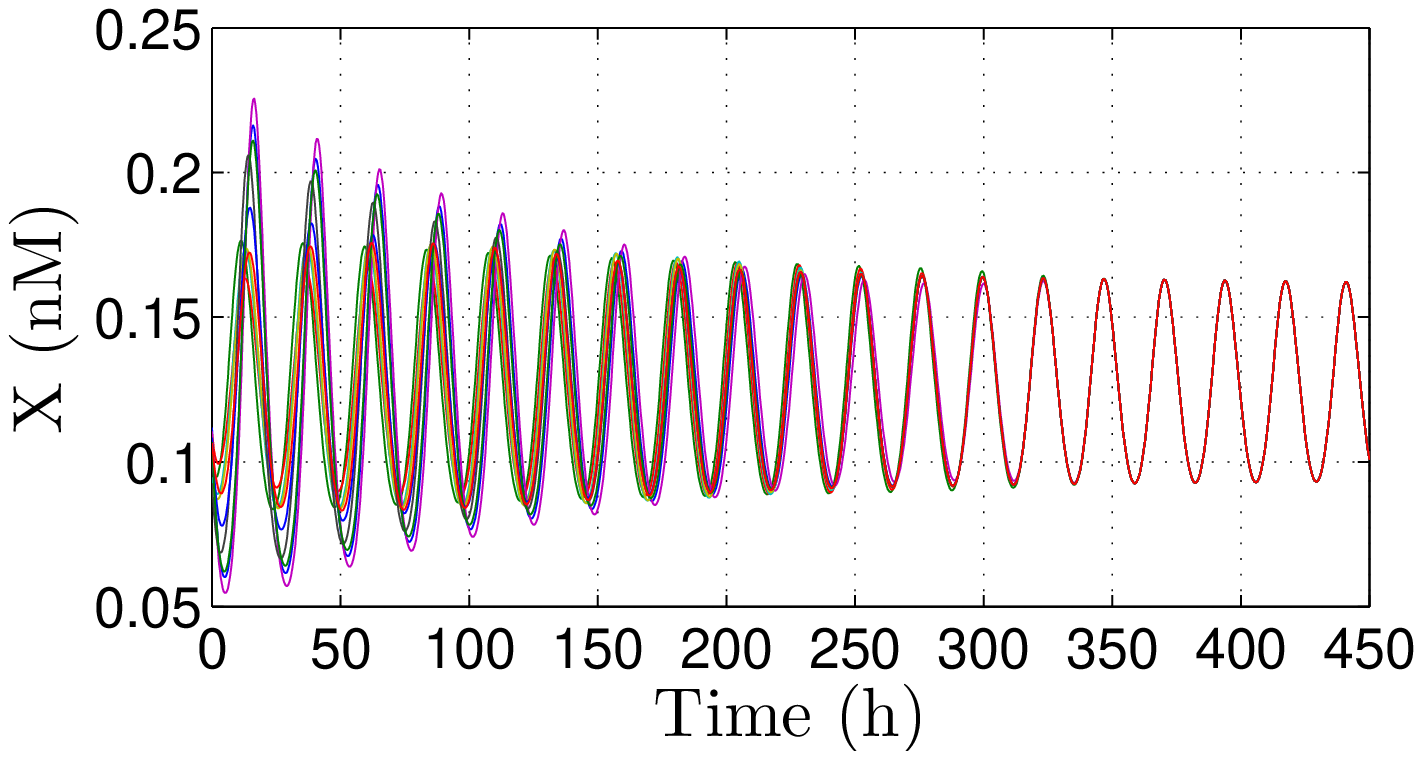}
\caption*{$c=100$}
\end{subfigure}
\caption{Dynamics of the mRNA levels ($X^i$) for $c=0,1,10,100$.}
\end{figure}

\section{Concluding remarks}\label{sec.conclude}

Theorem~\ref{thm-my} of this note has shown that a similar conclusion in comparison to that of \cite{HaStSeGo12} holds even when the coupled biochemical oscillators are under input constraints,
imposed by the requirements of biological feasibility and necessity to explicitly estimate the coupling gains. To satisfy these constraints, we
combine the linear coupling protocol from \cite{HaStSeGo12} with a ``saturating'' nonlinear block. We have proved that strong diffusive couplings can get coupled CFS-type oscillators synchronized when the saturation nonlinearity of the oscillators' inputs belongs to the identified class. Our proof is based on the synchronization criterion from \cite{HaStSeGo12}, extended to the systems with additional saturated block. The result may be extended to the CFS coupled through outputs $y_k$, as considered in \cite{HaraDoyle14}.
The techniques of quadratic constraints, used in our recent paper \cite{ProCaoZhangSch15}, allow to extend our results to some other types of ``saturated'' protocols, where not only control inputs,
but also outputs (or their deviations) are saturated.

The results of our paper can be applied e.g. to networks of \emph{synthetic} biochemical oscillators \cite{OBrien2012} where the couplings between the individual oscillators are artificially engineered. 
However, as has been reported by biochemists and biophysicists, the couplings between many natural biochemical oscillators, in particular neurons of the circadian pacemakers, are in general not diffusive 
\cite{GoBeWaKrHe05}. Hence, we are studying models for biochemical oscillators under mean field coupling \cite{GoBeWaKrHe05,LoWeKrHe08} or more complicated nearest-neighbor 
couplings \cite{VaHe11}, regulated by the concentrations of neurotransmitting polypeptides.

\bibliographystyle{unsrt}
\bibliography{ref_ming}

\end{document}